\newproof{proof}{Proof}
\newtheorem{lemma}{Lemma}
\newtheorem{theorem}{Theorem}
\newtheorem{solution}{Solution}
\newtheorem{corollary}{Corollary}
\newtheorem{definition}{Definition}[section]
\def\tsc#1{\csdef{#1}{\textsc{\lowercase{#1}}\xspace}}
\begin{document}
\let\WriteBookmarks\relax
\def\floatpagepagefraction{1}
\def\textpagefraction{.001}
\shorttitle{}
% \shortauthors{C. ZHENG et~al.}

\title [mode = title]{Fast Collaborative Inference via Distributed Speculative Decoding}                      
% \tnotemark[1,2]

% \tnotetext[1]{This document is the results of the research
%    project funded by the National Science Foundation.}

% \tnotetext[2]{The second title footnote which is a longer text matter
%    to fill through the whole text width and overflow into
%    another line in the footnotes area of the first page.}

\author[1]{Ce ZHENG}[orcid=0000-0001-7389-6010]
\ead{zhengc@pcl.ac.cn}
% \ead[url]{https://chriszhengce.github.io/index.html}

\author[2]{Ke ZHANG}
\cormark[1]
\ead{zhangke@moegi.waseda.jp}

\author[3]{Chen SUN}
\cormark[2]
\ead{chen.sun@sony.com}
% \ead[URL]{https://www.university.org}

\author[3]{Wenqi ZHANG}
\ead{wenqi.zhang@sony.com}
% \ead[URL]{https://ieeexplore.ieee.org/author/37086396306}

\author[4]{Qiong LIU}
\ead{qiong.liu@lip6.fr}
% \ead[URL]{https://scholar.google.com/citations?user=idPprosAAAAJ&hl=en}

\author[5]{Angesom Ataklity Tesfay}
\ead{angesom-ataklity.tesfay@univ-eiffel.fr}
% \ead[URL]{https://scholar.google.com/citations?hl=en&user=FfZwQcAAAAAJ&view_op=list_works&sortby=pubdate}

\affiliation[1]
{
organization={Department of Network Intelligence, Pengcheng Laboratory},
city={Shenzhen},
state={Guangdong},
country={China}
}

\affiliation[2]
{
organization={Department of Computer Science and Communications Engineering, Waseda University},
city={Tokyo},
country={Japan}
}

\affiliation[3]
{
organization={Wireless Netowork Research Department, SONY Research and Development Center},
city={Beijing},
country={China}
}

\affiliation[4]
{
organization={Sorbonne Université, CNRS, LIP6},
city={Paris},
country={France}
}

\affiliation[5]
{
organization={Laboratoire Électronique Ondes et Signaux, Université Gustave Eiffel},
city={Lille},
country={France}
}

\cortext[cor1]{Corresponding author}
\cortext[cor2]{Principal corresponding author}

% \nonumnote{This note has no numbers. In this work we demonstrate $a_b$
%   the formation Y\_1 of a new type of polariton on the interface
%   between a cuprous oxide slab and a polystyrene micro-sphere placed
%   on the slab.
%   }

% \begin{abstract}
% Speculative decoding accelerates large language model (LLM) inference by letting a smaller “draft” model predict multiple tokens in advance for verification by a larger “target” model. In AI-native radio access networks (AI-RAN), this supports collaborative inference between resource-constrained devices and edge servers or base stations (BSs). Existing distributed speculative decoding requires transmitting the full vocabulary distribution, incurring high uplink overhead. We propose Truncated Sparse Logits Transmission (TSLT), transmitting only logits of a truncated set of candidate tokens and their indices. We rigorously show the robustness of the acceptance rate under TSLT and extend it to multi-candidate speculative decoding, generating multiple candidates per token. This approach greatly reduces communication cost while preserving inference performance, enabling efficient, scalable distributed LLM inference in AI-RAN.
% \end{abstract}

\begin{abstract}
Speculative decoding accelerates large language model (LLM) inference by allowing a small draft model to predict multiple future tokens for verification by a larger target model. In AI-native radio access networks (AI-RAN), this enables device--edge collaborative inference but introduces significant uplink overhead, as existing distributed speculative decoding schemes transmit full vocabulary logits at every step.
We propose a sparsify-then-sample strategy, Truncated Sparse Logits Transmission (TSLT), which transmits only the logits and indices of a truncated candidate set. We provide theoretical guarantees showing that the acceptance rate is preserved under TSLT. TSLT is further extended to multi-candidate case, where multiple draft candidates per step increase acceptance probability.
Experiments show that TSLT significantly reduces uplink communication while maintaining end-to-end inference latency and model quality, demonstrating its effectiveness for scalable, communication-efficient distributed LLM inference in future AI-RAN systems.
\end{abstract}

% \begin{graphicalabstract}
% \includegraphics{figs/cas-grabs.pdf}
% \end{graphicalabstract}

% \begin{highlights}
% \item Research highlights item 1
% \item Research highlights item 2
% \item Research highlights item 3
% \end{highlights}

\begin{keywords}
Collaborative Inference \sep Speculative Decoding \sep Truncated Sampling \sep Multi-Candidate\sep Token Tree
\end{keywords}

\maketitle

\section{Introduction}
Large language models (LLMs) have reshaped artificial intelligence, powering applications from content generation to intelligent decision support. 
Meanwhile, AI-native radio access networks (AI-RANs)—which integrate communication and computation within the wireless infrastructure—provide a promising platform for real-time, LLM-enabled intelligence~\cite{kundu2025}.
Nevertheless, realizing efficient LLM inference in such heterogeneous AI-RAN environments remains challenging: devices are constrained by limited memory, energy, and computational capabilities, whereas server-side computation suffers from latency, performance jitter, and mobility-induced disruptions. Meanwhile, the autoregressive nature of large models introduces token-wise dependencies that hinder efficient partitioning across devices and servers, including pipeline and sequence parallelism~\cite{gao2025towards}.

These limitations motivate a new \textbf{\textit{collaborative inference}} paradigm in which lightweight small language models (SLMs) on devices interact with more capable LLMs hosted at base stations or edge servers~\cite{ding2024hybrid, hao2024hybrid}. For example, \cite{ding2024hybrid} introduces an adaptive routing mechanism that allocates tasks between the SLM and LLM based on computational demand and output fidelity. Similarly, \cite{hao2024hybrid} proposes a draft-and-verify scheme where SLM token outputs are filtered through a tunable probability threshold, balancing inference cost and model accuracy. 
While earlier collaborative schemes improved responsiveness, they often did so at the cost of degraded inference accuracy. Consequently, \textbf{\textit{speculative decoding (SD)}} was proposed to accelerate inference by having the SLM generate a sequence of draft tokens autoregressively, which are then verified in parallel by the LLM on the same node, preserving output fidelity~\cite{leviathan2023fast, chen2023accelerating}.  Extending this paradigm, \textbf{\textit{distributed speculative decoding (DSD)}} accelerates collaborative inference between SLM and LLM by executing the SLM locally and offloading verification to an edge server~\cite{zhao2024edge, oh2024uncertainty, ning2025dssd, zheng2025communication}. However, DSD entails frequent communication between the SLM and LLM, including transmitting draft token sequences and their probability distributions, which can introduce substantial delays—especially under constrained uplink conditions—and may degrade overall decoding performance.

DSD was first introduced by~\cite{zhao2024edge}, demonstrating how tuning the number of draft tokens can balance communication overhead and computation, thereby reducing end-to-end latency and energy consumption in wireless scenarios. Building on this idea, \cite{shao2025ai} conducted a case study on the Vehicles-OpenImage dataset, showing that DSD using a small on-device InternVL2-2B model and a large edge InternVL2-26B model can significantly improve inference speed while maintaining output accuracy. In parallel, \cite{xie2025novel} explores a complementary collaboration paradigm, proposing HAT, a hybrid U-shaped inference and SD framework that partitions the LLM into three submodels and exchanges hidden states rather than raw tokens. Despite the advantages of DSD, its practical deployment is constrained by substantial communication costs. To tackle this bottleneck, \cite{oh2024uncertainty} proposes an \textit{uncertainty-aware opportunistic hybrid language model (U-HLM)} that skips uplink transmission for tokens likely to be accepted, while \cite{oh2025communication} further extends this work with a Top-$K$ truncated vocabulary compression scheme; however, both methods introduce extra computation and inevitably degrade inference accuracy. Such degradation is particularly problematic for robotic control, where even small errors can propagate through the perception–decision–control pipeline, destabilizing action generation, causing trajectory drift, and potentially leading to unsafe behaviors~\cite{park2025action, park2025uncertainty}. 
Such concerns have spurred growing interest in \textit{lossless} DSD, with the aim of reducing communication overhead while preserving exact LLM outputs, though research in this direction remains relatively scarce. For instance, \cite{ning2025dssd} introduces \textit{distributed split speculative decoding} (DSSD), which partitions the verification phase between device and edge, sending a single vocabulary distribution in a downlink instead of multiple distributions in an uplink as in DSD. 
What is more, existing DSD-based approaches, ranging from lossy methods such as U-HLM to the lossless DSSD, have been limited to single-candidate decoding and have not considered multi-candidate approaches, in which multiple draft token sequences are generated and verified in parallel to improve decoding efficiency and reduce end-to-end latency~\cite{yang2024multi, miao2024specinfer, khisti2024multi, lu2024improving}. 
% \cite{zhang2025quantize} proposes a \textit{Quantize-Sample (Q-S)} strategy for DSD, which dynamically adjusts draft length and quantization precision to preserve LLM’s output distribution while accounting for semantic uncertainty and channel conditions, thus improving token throughput in realistic distributed deployments. 

To mitigate the excessive vocabulary transmission overhead in DSD, we propose a \textit{sparsify-then-sample} strategy that performs Top-$K$ or Top-$p$ sampling over the vocabulary. Specifically, transmitting the full probability distribution scales linearly with the vocabulary size; for instance, a 32K vocabulary with FP16 representation generates roughly 500 KB per token. By first truncating the distribution to the Top-$K$ or Top-$p$ entries and then re-normalizing, our method significantly reduces the effective vocabulary size and communication cost while ensuring lossless inference. Unlike the prior work \cite{oh2025communication}, which directly truncates the distribution by retaining only the Top-$K$ entries—thereby distorting the residual probabilities and resulting in lossy inference—our \textit{sparsify-then-sample} scheme preserves the full probability mass of the truncated distribution, guaranteeing lossless inference.
Building on this design, we extend the approach to the \textit{multi-candidate} (MC) regime, enabling parallel generation and verification of multiple draft sequences. A detailed algorithmic workflow is provided; although the procedure is complex, it is clearly structured and readily understandable. This is accompanied by a comprehensive analysis, with particular emphasis on its impact on the acceptance rate. In the MC regime, the communication overhead scales with the size of the generated token tree, depending on both its depth and breadth.~
In the experimental evaluation, we first computed the cumulative probability mass of the Top-$k$ entries under different values of $k$, confirming the sparsity of the output distribution and thereby demonstrating the feasibility of our sparsification approach. Next, we reported the acceptance rates corresponding to various $k$ values, which empirically validate our theoretical analysis. Finally, we measured the speedup achieved by our method, showing the performance improvement compared to inference using a single large model.

Therefore, our contributions are: \textbf{\textit{1)}. Sparsify-then-sample for lossless inference}: a sparsification strategy that truncates and re-normalizes the output distribution, significantly reducing communication overhead while ensuring lossless inference with analysis on robustness of acceptance rate.
\textbf{\textit{2)}. Extension to MC regime}: Generalization to multiple draft sequences, with a clearly structured yet comprehensive algorithm that maintains robust acceptance rates.
\textbf{\textit{3)}. Empirical validation}: Experiments confirm output sparsity, validate acceptance-rate robustness, and demonstrate speedup compared to standalone LLM inference.

The remainder of this paper is organized as follows. Section~\ref{sec:System model} introduces the system model and problem formulation. Section~\ref{sec:Truncated TSLT} presents the our Sparsify-then-sample Strategy. Section~\ref{sec:MU-DSD} extends the approach to multi-candidate speculative decoding scenarios. Section~\ref{sec:experiment} details the experimental setup and discusses the results. Finally, Section~\ref{sec:conclusion} concludes the paper and outlines potential directions for future work.

\section{System model}\label{sec:System model}
We first consider a single-candidate DSD (SC-DSD) scenario, where a SLM runs locally on the device, while a more powerful LLM resides at the base station (BS)~\cite{oh2024uncertainty, zhao2024edge}. Both models share the same tokenizer and hence the same vocabulary $\mathcal{V}$. Fig.~\ref{fig:DSD} illustrates the workflow of the DSD process. For simplicity and in line with most prior work, we assume that both models share the same tokenizer and hence the same vocabulary $\mathcal{V}$. 

\noindent\textbf{Remark.} This shared-vocabulary assumption is strong: in practice, the device and server models may have different vocabularies, which could lead to token mapping issues or decoding mismatches. Recent studies have proposed strategies to handle heterogeneous vocabularies in speculative decoding, such as string-level remapping and adaptive drafter mechanisms~\cite{timor2025accelerating, ramakrishnan2025omnidraft}. Implementing such cross-vocabulary support is left for future work, while our current study focuses on the simpler shared-vocabulary setting for clarity and tractability.

\subsection{Distributed speculative decoding}
\label{sec:V-DSD}
\begin{figure*}[htbp]
\begin{center}
\centerline{\includegraphics[width=0.75\textwidth]{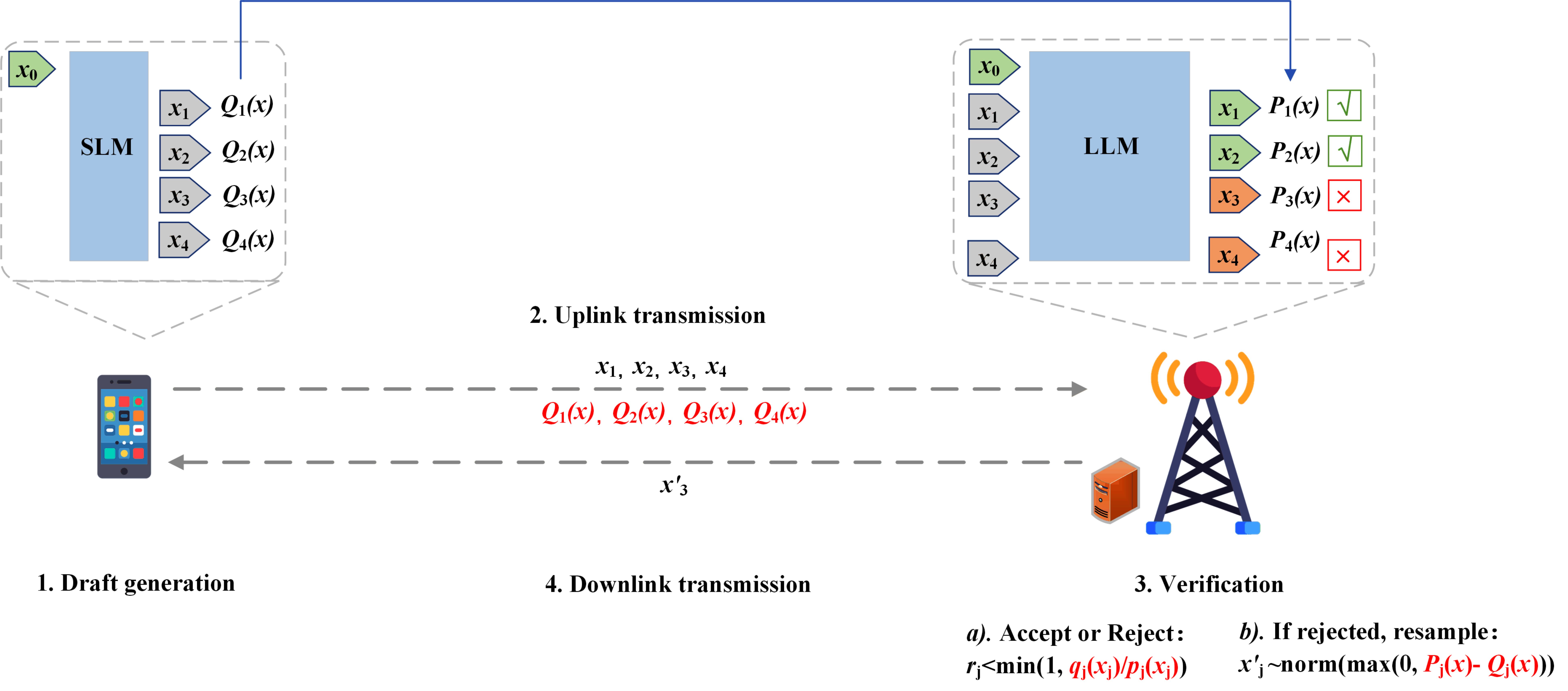}}
\caption{Single Candidate Distributed Speculative Decoding}
\label{fig:DSD}
\end{center}
\end{figure*}

As shown in Fig.~\ref{fig:DSD}, the SLM produces $L$ candidate tokens in an autoregressive manner, while the LLM performs parallel verification, retaining the accepted tokens and regenerating replacements for those that are rejected. The detailed procedure is presented in Algorithm~\ref{alg:DSD}, \ref{alg:TokSeqDraft}, \ref{alg:TokSeqVeri}. Specifically, \textit{1).} SLM first generates $L$ tokens $x_i \sim Q_(x)$ as described in Algorithm~\ref{alg:TokSeqDraft}, and then \textit{2).} sends $\mathcal{X} = \left[ x_1,\cdots,x_{L} \right]$ and $\mathcal{Q} = \left[Q_1(x), \cdots, Q_{L}(x) \right]$ to BS.\footnote{To reduce communication overhead, the device transmits only the indices of the drafted tokens within the vocabulary rather than their full string representations. For clarity of presentation, we still depict tokens in Fig.~\ref{fig:DSD} and Alg.~\ref{alg:DSD}.} \textit{3).} At the BS, following Algorithm~\ref{alg:TokSeqVeri}, the LLM is first run once to get $\mathcal{P} = \left[ P_1(x), \cdots, P_{L+1}(x) \right]$. Then it verifies the received tokens $\mathcal{X}$ in parallel and outputs the verified token $\mathcal{X}_{veri} = \left[ x_1,\cdots,x_j \right]$. \textit{4).} The BS sends back $x_j$ together with its index $j$ for prefix update at the device.\footnote{As $\mathcal{X}$ is already stored locally, only $x_j$ and $j$ are transmitted.} The procedure continues until completion.

% Vanilla DSD
\begin{algorithm}[ht]
\caption{Single-Candidate DSD}
\label{alg:DSD}
\begin{algorithmic}
% Device (Draft Token Sequence Generation):
\STATE \hspace{-1.2em} {\bf $\triangleright$ Device (Token Sequence Drafting):}
\STATE \hspace{-1.2em} {\bfseries Input:} Prefix, SLM -- $M_q$, token length -- $L$
\STATE  $\mathcal{X},\mathcal{Q} \gets \textsc{\bf TokSeqDraft}(Prefix, L)$  
\STATE \hspace{-1.2em} {\bfseries Output:} $\mathcal{X}=\left[ x_1,\cdots,x_
{L} \right]$ and $\mathcal{Q}=\left[ Q_1(x), \cdots, Q_{L}(x) \right]$
\STATE \vspace{-2mm} \hspace{-1em}\hrulefill
% Uplink Transmission:
\STATE \hspace{-1.2em} {\bf $\triangleright$ Uplink Transmission:}
    \STATE Send tokens $\mathcal{X}$ and $\mathcal{Q}$ to the BS via uplink transmission
\STATE \vspace{-2mm} \hspace{-1em}\hrulefill
% Edge (Token Sequence Verification)
\STATE \hspace{-1.2em} {\bf $\triangleright$ Edge (Token Sequence Verification): }
\STATE {\bfseries Input:} Prefix, LLM -- $M_p$, $\mathcal{X}$, $\mathcal{Q}$
\STATE $\mathcal{X}_{veri},\mathcal{Q} \gets \textsc{\bf TokSeqVeri}(Prefix, \mathcal{X}, \mathcal{Q})$
\STATE {\bfseries Output:} $\mathcal{X}_{veri} = \left[ x_1,\cdots,x_j \right]$
\STATE \vspace{-2mm} \hspace{-1em}\hrulefill
% Downlink Transmission
\STATE \hspace{-1.2em} {\bf $\triangleright$ Downlink Transmission \& Prefix Update}
    \STATE Send $x_j$ and $j$ back to device 
    \STATE Update $\mathrm{Prefix} = \mathrm{Prefix} \oplus \mathcal{X}_{veri}$ for at both device and edge
\end{algorithmic}
\end{algorithm}

% TokSeqDraft
\begin{algorithm}[ht]
\caption{\textsc{TokSeqDraft}}
\label{alg:TokSeqDraft}
\begin{algorithmic}
    \STATE \hspace{-1.2em} {\bfseries Device:} Initialize $\mathbf{y}=[]$
    \STATE {\bfseries Input:} Prefix, SLM -- $M_q$, token length -- $L$
% Autoregressive Generation
    \STATE {\bf $\triangleright$ Sample $L$ tokens autoregressively:}
    \FOR{$i=1$ \TO $L$}
    \STATE $Q_i(x)\leftarrow M_q(prefix \oplus \mathbf{y})$
    \STATE Sample $x_i \sim Q_i(x)$
    \STATE $\mathbf{y} = \mathbf{y} \oplus x_i$
    \ENDFOR
    \STATE {\bfseries Output:} $\mathcal{X} = \left[ x_1,\cdots,x_{L} \right]$ and $\mathcal{Q} = \left[ Q_1(x), \cdots, Q_{L}(x) \right]$
\end{algorithmic}   
\end{algorithm}

% TokSeqVeri
\begin{algorithm}[ht]
\caption{\textsc{TokSeqVeri}}
\label{alg:TokSeqVeri}
\begin{algorithmic}
% Verification at BS
\STATE \hspace{-1.2em} {\bf $\triangleright$ Verification Process}
\STATE \hspace{-1.2em} {\bfseries Edge:} Initialize $j=1$, $\mathrm{Flag} = 1$
    \STATE {\bfseries Input:} Prefix, LLM -- $M_p$, $\mathcal{X}$, $\mathcal{Q}$
    \STATE {\bf $\triangleright$ Run $M_p$ in parallel}
    \STATE $\mathcal{P}=[P_1(x), \cdots, P_{L+1}(x)] \gets M_q(prefix \oplus \mathcal{X})$
    \WHILE{$j \leq L$ \& $\mathrm{Flag}=1$}
        \STATE Sample $r_j \sim U[0,1]$
        \STATE {\bf $\triangleright$ Accept}
        \IF{$r_j< \min \left\{1, \frac{q_j(x_j)}{p_j(x_j)} \right\}$}
        \STATE $x_j$ is accepted
        \STATE $j=j+1$
        \STATE \hspace{-1.2em} {\bf $\triangleright$ Reject and resample}
        \ELSE
        \STATE $x'_j \sim \mathrm{norm} \left(\max \left(0, P_j(x)-Q_j(x)\right)\right)$
        \STATE $x_j = x'_j$
        \STATE $\mathrm{Flag}=0$
        \ENDIF
    \IF{$j=L+1$}
        \STATE Sample $x_{L+1} \sim P_{L+1}(x)$
    \ENDIF
    \ENDWHILE
    \STATE {\bfseries Output:} 
\end{algorithmic}   
\end{algorithm}

\subsection{Wireless communication}
We ignore the downlink transmission latency in our system model, as it is significantly smaller than the uplink latency and does not affect the overall delay analysis.\footnote{The downlink generally achieves higher data rates due to higher transmit power and more favorable propagation conditions. Moreover, in DSD, the downlink only delivers lightweight feedback (e.g., acceptance signals or corrected tokens).} For simplicity, we also assume that the uplink transmission of drafted token indices contributes negligibly to the total communication cost. 
 More importantly, we model the uplink using an \textit{effective data rate} \(R_{\text{up}}\), which can be interpreted as the post-HARQ goodput. This abstraction implicitly accounts for practical wireless link behaviors such as retransmissions, packet losses, and coding overhead, and is standard in edge inference and communication-efficient ML literature. Since these factors mainly scale the communication time by a constant factor, omitting an explicit link-layer model does not affect the relative speedup analysis.

Therefore, the communication time is
\begin{equation}
    T_{\text{comm}} = \frac{D_{\text{up}}} {R_{\text{up}}},
\end{equation}
where
\begin{align}
    \label{eq:Vocabulary}
    D_{\text{up}} &= L \cdot D_{\mathcal{V}}, \\ D_{\mathcal{V}} &= |\mathcal{V}| \cdot \left( b_{\text{prob}}+b_{\text{idx}} \right)
\end{align}
with $D_{\mathcal{V}}$ denoting the amount of data for a single vocabulary distribution, $|\cdot|$ the cardinality, $b_{\text{prob}}$ the bit-width of each probability value (e.g., $b_{\text{prob}} = 32$ for full precision or 16 for half precision), and $b_{\text{idx}}$ the bit-width for each token index. Using binary encoding for indices gives $b_{\text{idx}} = \left\lceil \log_2 |\mathcal{V}| \right\rceil$.

Hence, we have
\begin{equation}
T_{\text{comm}} = L \cdot |\mathcal{V}| \cdot \frac{ b_{\text{prob}} + b_{\text{idx}} }{R_{\text{up}}}.
\end{equation}
% \begin{equation}
% T_{\text{comm}} = \frac{D_{\text{up}}}{R_{\text{up}}} = L \cdot T_{\mathcal{V}},
% \end{equation}
% where
% \begin{equation}
% T_{\mathcal{V}} = \frac{|\mathcal{V}| \cdot \left( b_{\text{prob}} + b_{\text{idx}} \right)}{R_{\text{up}}}
% \end{equation}
% denotes the time required to transmit a single vocabulary distribution.

\subsection{Token throughput}
Inference latency consists of three components: the on-device SLM drafting time, the edge-side LLM verification time, and the device–edge communication time. Following~\cite{yin2024theoretical}, we refer to one execution of Algorithm~\ref{alg:DSD} as an ``\textit{oracle call},'' whose latency is given by
\begin{align}
\label{eq:T_inf}
    T_{\text{oracle}} &= L \cdot T_{\text{SLM}} + T_{\text{comm}} + T_{\text{LLM}} 
\end{align}
where $T_{\text{SLM}}$ and $T_{\text{LLM}}$ denote the runtime of a single execution of the SLM and LLM, respectively.

We follow~\cite{leviathan2023fast} to define the acceptance rate as follows.

\begin{definition}[Acceptance Rate~\cite{leviathan2023fast}]
Given a prefix, the acceptance rate $\beta_i$ of a token
$x \sim Q_i$ under speculative decoding is defined as
\begin{equation}
\beta_i = \Pr[x \text{ is accepted} \mid x \sim Q_i].
\end{equation}
The \textit{expected acceptance rate} is
\begin{equation}
\alpha = \mathbb{E}[\beta_i].
\end{equation}
\end{definition}

Under this definition, the expected number of tokens generated in one oracle call is~\cite{leviathan2023fast}
\begin{equation}
N_{\text{oracle}} = \frac{1-\alpha^{L+1}}{1-\alpha}.
\end{equation}

Therefore, the token througput is
\begin{align}
\label{eq:throughput}
\Theta_{\text{DSD}} 
&= \frac{N_{\text{oracle}}}{T_{\text{oracle}}} \notag\\
&= \frac{1-\alpha^{L+1}}{(1-\alpha)L\left(T_{\text{SLM}}+|\mathcal{V}| \frac{b_{\text{prob}} + b_{\text{idx}}}{R_{\text{up}}} \right) + T_{\text{LLM}}}
\end{align}

We introduce the speedup ratio $S_{\text{inf}}$ to quantify the performance gain of DSD over a standalone LLM:
\begin{definition}[Speedup Ratio]
\begin{equation}
\label{eq:speedup}
    S_{\mathrm{inf}} = \frac{\Theta_{\mathrm{DSD}}}{\Theta_{\mathrm{LLM}}},
\end{equation}
where $\Theta_{\mathrm{LLM}}=\frac{1}{T_{\mathrm{LLM}}}$ is the token throughput of standalone LLM.
\end{definition}

This metric enables clear comparison with the baseline, eliminates hardware-dependent factors, and is widely used in prior works.
It also facilitates bottleneck analysis and helps reveal how key parameters (e.g., $\alpha$ and $|\mathcal{V}|$) affect the achievable acceleration.

\subsection{Problem formulation}
Our objective is to maximize the token throughput achieved by DSD. Equivalently, we aim to maximize the speedup ratio \(S_{\mathrm{inf}}\), defined in \eqref{eq:speedup}. From \eqref{eq:throughput} and \eqref{eq:speedup}, the speedup can be expressed as a function of key system parameters:
\begin{equation}
    S_{\text{inf}} = S_{\text{inf}}\big( |\mathcal{V}|, \alpha, L, b_{\text{prob}}, b_{\text{idx}}, T_{\text{SLM}}, T_{\text{LLM}}, R_{\text{up}} \big).
\end{equation}

Among these parameters, \(T_{\mathrm{SLM}}\) and \(T_{\mathrm{LLM}}\) are determined by the model architectures and hardware platform. The index bit-width \(\lceil \log_2 |\mathcal{V}| \rceil\) is fixed once the vocabulary size is chosen. The probability quantization bit-width \(b_{\mathrm{prob}}\) and the draft length \(L\) are treated as predefined hyperparameters. The acceptance rate \(\alpha\), in turn, depends on the alignment between the SLM and LLM outputs, and is therefore also influenced by the effective vocabulary over which the logits are normalized.

While the hardware- and model-dependent parameters are fixed, the system performance can still be substantially improved by reducing the communication load per draft step. Notably, the vocabulary size \(|\mathcal{V}|\) jointly determines the payload of each transmitted logits vector and the acceptance behavior in speculative decoding. This makes compressing the effective vocabulary—without compromising correctness—a key opportunity for boosting throughput.

Motivated by this observation, the next section introduces a sparsify-then-sample scheme that reduces the effective vocabulary size, lowers communication overhead, and thereby improves the token throughput of DSD.

% \section{Truncated Sampling and Sparse Logits Transmission}
\section{Sparsify-then-sample strategy}
\label{sec:Truncated TSLT}

As shown in \eqref{eq:Vocabulary}, token generation involves uploading a distribution whose payload size is proportional to the dimensionality of the vocabulary space. Such excessive uplink transmission leads to prohibitive communication latency, thereby reducing overall inference throughput. To address this, we propose a sparsify-then-sample approach: the logits are first sparsified using truncated sampling to obtain a reduced vocabulary distribution, which is then sampled by the SLM. The resulting sparsified logits are transmitted to the LLM for verification. Based on this design, we term our method the \textbf{Truncated Sparse Logits Transmission (TSLT)} scheme, which reduces the transmitted payload while preserving the correctness of generated tokens.

Unlike the naive Top-$K$ truncation method used in \cite{oh2025communication}, which may degrade inference accuracy, our method ensures lossless inference. For completeness, we first review the truncated sampling strategies commonly employed in language generation.

\subsection{Truncated sampling}
In truncated sampling, the candidate space is restricted to a subset of the vocabulary before probabilistic sampling is performed.
Formally, given a vocabulary $\mathcal{V}$ and a probability distribution $Q(x)$ over $\mathcal{V}$, truncated sampling constructs a subset $\mathcal{V}' \subseteq \mathcal{V}$ and renormalizes the probabilities:
\begin{equation}
\label{eq:truncation}
    \hat{q}(x) = \begin{cases}
        0,  & x \in \mathcal{V'} \\
        \frac{q(x)}{\sum_{v \in \mathcal{V'}} q(v)}, & x \notin \mathcal{V'}
    \end{cases}
\end{equation}
Two widely used truncation schemes are {\bfseries Top-$K$ sampling} and {\bfseries Top-$p$ (nucleus) sampling}. Let $\left\{x^{[1]}>x^{[2]}>\dots, x^{[|\mathcal{V}|]} \right\}$ be the tokens sorted in descending order of probability $P(x)$. We define

\begin{itemize}[left=0pt]
\item {\bfseries Top-$K$ sampling}:
Select the $K$ tokens with the highest probabilities:
\begin{equation}
    \mathcal{V}_K = \left\{ x^{[1]}, x^{[2]}, \dots, x^{[K]} \right\},
\end{equation}
\item {\bfseries Top-$\rho$ (nucleus) sampling}\footnote{Also commonly denoted as "Top-$p$" sampling in the literature; here we use $\rho$ to avoid confusion with probability distributions $p(\cdot)$}:
Select the smallest set of tokens $\mathcal{V}_{\rho}$ such that their cumulative probability exceeds a threshold $p \in (0,1]$:
\begin{equation}
    \mathcal{V}_{\rho} = \left\{ x^{[1]}, x^{[2]}, \dots, x^{[n]} \right\},
\end{equation}
where $\sum_{i=1}^{n} q\left(x^{[i]}\right) < \rho$ and $\sum_{i=1}^{n+1} q\left(x^{[i]}\right) \geq \rho$.
\end{itemize}

By renormalizing over $\mathcal{V}_K$ or $\mathcal{V}_{\rho}$, truncated sampling eliminates extremely low-probability tokens while retaining sufficient diversity, thus achieving a balance between fluency and variability in generation.

\subsection{Truncated sparse logits transmission}
Leveraging truncated sampling, we propose the Truncated Sparse Logits Transmission (TSLT) scheme to reduce uplink overhead. The underlying intuition is as follows: provided that a draft token undergoes the verification process, SCa-DSD guarantees consistency with a standalone LLM. Specifically, the resulting token probability distribution is theoretically equivalent to that of the original LLM, {\bfseries irrespective of the specific form or variation of $\bm{Q(x)}$}, the output distribution from the SLM. Consequently, {\bfseries if $\bm{Q(x)}$ is sufficiently sparse, the communication cost can be significantly reduced without affecting the probabilistic equivalence with the original LLM}:

\begin{solution}[Truncated Sparse Logits Transmission]
The SLM first performs truncated sampling directly on the logits, e.g., Top-$K$ or Top-$\rho$ (nucleus) sampling. Only the selected logits and their corresponding token IDs are transmitted to the LLM for verification. 
\end{solution}

In this scheme, only the selected logits are retained, resulting in a sparse representation. By transmitting this sparse set rather than the full logits, the communication overhead is significantly reduced, while the most probable candidate tokens are preserved for verification by the LLM.
Compared with applying truncation on probabilities, where softmax normalization tends to suppress small-probability tokens and amplify rounding errors, logits-based truncation better preserves numerical precision and provides a more stable and faithful representation of the SLM's output.

In contrast to prior approaches such as \cite{oh2025communication}, which perform sampling and then apply Top-$K$ truncation on the transmitted distribution, this strategy can lead to lossy inference because the truncated set no longer reflects the full token distribution of the standalone LLM. Such methods introduce  bias. Our logits-level truncation avoids this problem by retaining the most informative logits before normalization, thereby offering a more coherent trade-off between communication efficiency and fidelity to the original LLM. However, the acceptance rate is modified. 

In contrast to prior approaches such as \cite{oh2025communication}, which perform sampling and then apply Top-$K$ truncation on the transmitted distribution, this strategy can lead to lossy inference because the truncated set no longer reflects the full token distribution of the standalone LLM. Such methods introduce distributional bias, as the retained probabilities are distorted by the sampling-and-truncation pipeline. Our logits-level truncation avoids this problem by retaining the most informative logits before normalization, thereby offering a more coherent trade-off between communication efficiency and fidelity to the original LLM. However, the acceptance rate is modified:

\begin{lemma}
[Acceptance Rate and TV Distance~\cite{leviathan2023fast}]
For the $i$-th decoding iteration, the acceptance rate satisfies
\begin{equation}
    \beta_j = \sum_{x\in\mathcal{V}} \min \left\{ Q_i(x), P_i(x) \right\}
            = 1 - TV \left( Q_i, P_i \right),
\end{equation}
where $Q_i$ and $P_i$ denote the SLM and LLM output distributions, respectively. And
\begin{equation}
    TV\left(Q_i,P_i\right)
    = \frac{1}{2} \sum_{x\in\mathcal{V}} \left|Q_i(x)-P_i(x)\right|
\end{equation}
is the total variation distance between the two distributions.
\end{lemma}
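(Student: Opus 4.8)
The plan is to obtain both equalities directly from the speculative-decoding acceptance rule, viewing the claim as the composition of two elementary identities. First I would translate the definition of $\beta_i$ into a sum over the vocabulary. The drafted token satisfies $x \sim Q_i$ and, under the sampling step of Algorithm~\ref{alg:TokSeqVeri}, is accepted with probability $\min\{1, P_i(x)/Q_i(x)\}$, so the law of total probability gives
\begin{equation}
\beta_i = \sum_{x \in \mathcal{V}} Q_i(x) \, \min\left\{1, \frac{P_i(x)}{Q_i(x)}\right\}.
\end{equation}
The first key step is the pointwise collapse $Q_i(x)\,\min\{1, P_i(x)/Q_i(x)\} = \min\{Q_i(x), P_i(x)\}$, which I would justify by splitting on the sign of $P_i(x) - Q_i(x)$ and noting that terms with $Q_i(x)=0$ contribute nothing on either side. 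Summing over $x$ then yields the first equality $\beta_i = \sum_{x} \min\{Q_i(x), P_i(x)\}$.

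Second, I would connect this minimum-sum to the total-variation distance through the scalar identity $\min\{a,b\} = \frac{1}{2}\left(a + b - |a-b|\right)$, which holds for all reals. Applying it termwise and summing,
\begin{equation}
\sum_{x \in \mathcal{V}} \min\{Q_i(x), P_i(x)\} = \frac{1}{2}\sum_{x\in\mathcal{V}} \left(Q_i(x) + P_i(x) - |Q_i(x)-P_i(x)|\right).
\end{equation}
Since $Q_i$ and $P_i$ are probability distributions, $\sum_{x} Q_i(x) = \sum_{x} P_i(x) = 1$, so the first two terms contribute $1$, leaving $1 - \frac{1}{2}\sum_{x} |Q_i(x)-P_i(x)| = 1 - TV(Q_i, P_i)$. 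Chaining the two displays establishes the stated result.

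There is no deep obstacle here, as the argument reduces to a two-line computation; the only points requiring care are bookkeeping details. Specifically, I would ensure the acceptance rule is applied in the correct direction (the ratio $P_i/Q_i$, consistent with the residual resampling $\mathrm{norm}(\max(0, P_j(x) - Q_j(x)))$ invoked upon rejection), and I would handle the support mismatches where $Q_i(x)=0$ or $P_i(x)=0$ so that the $\min$-collapse and the absolute-value identity remain exact at every coordinate rather than merely almost everywhere.
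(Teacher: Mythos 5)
Your proof is correct and follows essentially the same route the paper takes: the paper cites this lemma from Leviathan et al. and, for the analogous multi-candidate version, proves only the first equality by the identical collapse $q(x)\min\{1,p(x)/q(x)\}=\min\{q(x),p(x)\}$. Your additional step via $\min\{a,b\}=\tfrac12\left(a+b-|a-b|\right)$ to reach the total-variation form is the standard completion and is exactly what the paper leaves implicit; you are also right that the ratio in Algorithm~\ref{alg:TokSeqVeri} should read $p_j(x_j)/q_j(x_j)$, consistent with the residual resampling and with $\textsc{MCVeri}$.
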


When truncated sampling method is employed, the acceptance rate becomes
\begin{equation}
\hat{\beta_i} = \sum_x \min \left\{ \hat{q}_i(x), p_i(x) \right\}=1-TV \left( \hat{Q}_i(x),P_i(x) \right),
\end{equation}
where $\hat{Q}_i(x)$ and $\hat{q}_i(x)$ denote the distribution obtained from truncated sampling.
The change in $\hat{Q}_i(x)$ relative to $Q_i(x)$ is not deterministic and may either increase or decrease, depending on the relationship between $P_i(x)$ and $Q_i(x)$. Specifically:
\begin{itemize}[left=0pt]
    \item If the selected points of \(Q(x)\) (i.e., those with the highest \(q(x)\) values) correspond to high values of \(p(x)\), then \(\hat{\alpha}\) may increase because \(Q(x)\) concentrates probability mass on points where \(\min(p(x), \hat{q}(x))\) is large.
    \item Conversely, if the selected points of \(Q(x)\) correspond to low values of \(p(x)\), then \(\hat{\beta}_i\) may decrease because \(\min(p(x), \hat{q}(x))\) is small despite the concentration of mass.
\end{itemize} 
And
\begin{equation}
\hat{\alpha} = \mathbb{E} \left[\hat{\beta_i}\right],
\end{equation}

\subsection{Robustness of acceptance rate}
According to \eqref{eq:speedup}, the transformation from $\beta_i$ to $\hat{\beta}_i$ induced by truncated sampling on $Q(x)$ substantially affects the speedup ratio. For notational simplicity, we omit the argument $x$ and the iteration index $j$, and write $Q$ and $P$ instead of $Q_j(x)$ and $P_j(x)$ when the meaning is clear from context.

To demonstrate the robustness of the acceptance rate, we first state two lemmas that underpin the proof of the theorem presented next.

\begin{lemma}
\label{lemma:TVD}
Let $P$, $Q$, and $\hat{Q}$ be three probability distributions over the same sample space. Then the total variation distances satisfy
\begin{equation}
\label{eq:Lemma_TV}
\big|\mathrm{TV}(\hat{Q},P) - \mathrm{TV}(Q,P)\big| \le \mathrm{TV}(\hat{Q},Q),
\end{equation}
\end{lemma}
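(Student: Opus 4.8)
The plan is to recognize this inequality as the reverse triangle inequality for the total variation distance, exploiting the fact that $\mathrm{TV}(\cdot,\cdot)$ is a genuine metric on the space of probability distributions over $\mathcal{V}$. All of the work therefore reduces to establishing the ordinary triangle inequality for $\mathrm{TV}$ and then applying it in both directions to account for the absolute value on the left-hand side.

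First I would establish the triangle inequality directly from the $\ell_1$ characterization given in the preceding lemma, namely $\mathrm{TV}(A,B) = \tfrac{1}{2}\sum_{x\in\mathcal{V}} |A(x) - B(x)|$. Summing the pointwise inequality $|A(x) - C(x)| \le |A(x) - B(x)| + |B(x) - C(x)|$ over $x$ and scaling by $\tfrac{1}{2}$ yields $\mathrm{TV}(A,C) \le \mathrm{TV}(A,B) + \mathrm{TV}(B,C)$. Specializing to $A = \hat{Q}$, $B = Q$, $C = P$ gives
\[
\mathrm{TV}(\hat{Q}, P) \le \mathrm{TV}(\hat{Q}, Q) + \mathrm{TV}(Q, P).
\]
Using the symmetry $\mathrm{TV}(A,B) = \mathrm{TV}(B,A)$ and applying the same bound with the roles of $\hat{Q}$ and $P$ interchanged gives
\[
\mathrm{TV}(Q, P) \le \mathrm{TV}(\hat{Q}, Q) + \mathrm{TV}(\hat{Q}, P).
\]
Rearranging the two displays produces $\mathrm{TV}(\hat{Q},P) - \mathrm{TV}(Q,P) \le \mathrm{TV}(\hat{Q},Q)$ and $\mathrm{TV}(Q,P) - \mathrm{TV}(\hat{Q},P) \le \mathrm{TV}(\hat{Q},Q)$, and combining them into a single absolute-value bound delivers \eqref{eq:Lemma_TV}.

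There is no genuine obstacle here: the entire argument is a two-line consequence of the triangle inequality, and the only point requiring minor attention is invoking it in both directions so that the quantity $\mathrm{TV}(\hat{Q},P) - \mathrm{TV}(Q,P)$ is controlled regardless of its sign. If one prefers to avoid appealing to metric-space terminology, the self-contained derivation from the $\ell_1$ representation above suffices, and it makes transparent that the bound holds for arbitrary distributions $P$, $Q$, $\hat{Q}$ over $\mathcal{V}$, not merely those arising from truncation.
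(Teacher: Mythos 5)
Your proposal is correct and follows essentially the same route as the paper: the paper sets $a(v)=\hat{q}(v)-p(v)$, $b(v)=q(v)-p(v)$ and invokes the reverse triangle inequality for the $\ell_1$ sums directly, which is exactly the two-directional triangle-inequality argument you spell out at the level of the $\mathrm{TV}$ metric. No gap; the two derivations differ only in presentation.
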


\begin{proof}
See Appendix~\ref{app:proof_TVD} for the full proof.
\end{proof}

\begin{theorem}
\label{theorem:acceptance_rate}
Let $\mathcal{V'} = \{ x^{[1]}, \dots, x^{[K]} \} \subset \mathcal{V}$ be the set of Top-$K$ tokens selected from the distribution $Q(x)$ by truncated sampling, with the total probability of the selected and remaining sets being $\rho = \sum_{v \in \mathcal{V'}} q(v)$ and $\sigma = \sum_{v \notin \mathcal{V'}} q(v)$, respectively. Let $\hat{Q}(x)$ denote the resulting distribution after truncation and renormalization over $\mathcal{V'}$. Then, the following inequality holds:
\begin{equation}
    \left| \hat{\beta} - \beta \right| \leq \sigma,
\end{equation}
where $\beta$ and $\hat{\beta}$ are the acceptance rates under the original distribution $Q(x)$ and the truncated distribution $\hat{Q}(x)$, respectively.

Besides,
\begin{equation}
\label{eq:Lemma_TVQQ}
\mathrm{TV}(\hat{Q},Q) = \sigma
\end{equation}
\end{theorem}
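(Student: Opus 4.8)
The plan is to prove the two displayed claims in reverse order, since the exact identity $\mathrm{TV}(\hat{Q},Q)=\sigma$ is precisely the quantity appearing on the right-hand side of Lemma~\ref{lemma:TVD}; once it is in hand, feeding it into that lemma delivers the acceptance-rate bound almost immediately.

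First I would establish the identity $\mathrm{TV}(\hat{Q},Q)=\sigma$ by direct computation. After renormalizing over $\mathcal{V}'$ we have $\hat{q}(x)=q(x)/\rho$ for $x\in\mathcal{V}'$ and $\hat{q}(x)=0$ otherwise. Splitting the total-variation sum into the selected set and its complement gives
\begin{equation}
\mathrm{TV}(\hat{Q},Q)=\frac{1}{2}\sum_{x\in\mathcal{V}'}\left|\frac{q(x)}{\rho}-q(x)\right|+\frac{1}{2}\sum_{x\notin\mathcal{V}'}\left|q(x)\right|.
\end{equation}
On $\mathcal{V}'$, since $\rho\le 1$ we have $1/\rho-1\ge 0$, so each term equals $q(x)\,(1-\rho)/\rho$, and summing over $\mathcal{V}'$ yields $\rho\cdot(1-\rho)/\rho=1-\rho=\sigma$. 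On the complement the sum is simply $\sum_{x\notin\mathcal{V}'}q(x)=\sigma$. Averaging the two contributions gives $\tfrac{1}{2}(\sigma+\sigma)=\sigma$, which is exactly \eqref{eq:Lemma_TVQQ}. The only delicate point here is removing the absolute value on $\mathcal{V}'$, which relies on $\rho\le 1$ so that $1/\rho\ge 1$.

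Next I would convert the acceptance-rate gap into a difference of total-variation distances. By the acceptance-rate lemma, $\beta=1-\mathrm{TV}(Q,P)$ and $\hat{\beta}=1-\mathrm{TV}(\hat{Q},P)$, so the constant terms cancel and
\begin{equation}
\left|\hat{\beta}-\beta\right|=\left|\mathrm{TV}(\hat{Q},P)-\mathrm{TV}(Q,P)\right|.
\end{equation}
Applying Lemma~\ref{lemma:TVD} bounds the right-hand side by $\mathrm{TV}(\hat{Q},Q)$, and substituting the identity from the previous step gives $\left|\hat{\beta}-\beta\right|\le\mathrm{TV}(\hat{Q},Q)=\sigma$, which completes the argument.

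I do not expect a genuine obstacle: given Lemma~\ref{lemma:TVD}, the acceptance-rate bound is essentially a two-line combination, and the substantive content is concentrated in the elementary computation of $\mathrm{TV}(\hat{Q},Q)$. The most error-prone step is the sign bookkeeping when discharging the absolute value on the selected set, which is why I would deliberately prove $\mathrm{TV}(\hat{Q},Q)=\sigma$ first and then invoke it, rather than attempting to bound $\left|\hat{\beta}-\beta\right|$ directly from the definitions of the two distributions.
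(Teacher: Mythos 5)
Your proposal is correct and follows essentially the same route as the paper's own proof: both compute $\mathrm{TV}(\hat{Q},Q)=\sigma$ by splitting the $\ell_1$ sum over $\mathcal{V}'$ and its complement (each contributing $\sigma$, equivalently a total of $2\sigma$ before halving), then write $|\hat{\beta}-\beta|$ as a difference of total-variation distances via the acceptance-rate lemma and invoke Lemma~\ref{lemma:TVD}. Your explicit remark that discharging the absolute value on $\mathcal{V}'$ requires $\rho\le 1$ is a point the paper leaves implicit, but the argument is otherwise identical.
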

\begin{proof}
    See Appendix~\ref{app:proof_acceptance} for the full proof.
\end{proof}
Therefore, the expected acceptance rate is bounded as 
\begin{equation}
    |\hat{\alpha} - \alpha| \leq  \mathbb{E}\left[\sigma_i\right]
\end{equation}
where $\sigma_i = \sum_{v\notin\mathcal{V}'_i} q(v)$ denotes the truncated probability mass at the $i$-th step.

This inequality demonstrates the robustness of the acceptance rate $\beta$ under truncated sampling, 
as the deviation $|\hat{\beta}-\beta|$ is tightly controlled by the expected truncated probability mass $\sigma$. 
In particular, when the discarded probability mass is small, the acceptance rate remains close to that of the original distribution, 
indicating that truncated sampling does not substantially compromise inference quality.

\section{Multi-candidate distributed speculative decoding}
\label{sec:MU-DSD}
In SC-DSD, the draft model generates a single candidate token at each position. 
We extend this paradigm to a \textit{Multi-Candidate Distributed Speculative Decoding (MC-DSD)} setting, in which the draft model samples multiple candidate tokens per position. 
This branching process naturally forms a \textbf{token tree}, with each level corresponding to a generation step and each node representing a candidate token. 
Each path in the token tree corresponds to one candidate sequence, and the tree as a whole compactly encodes all hypotheses produced by the draft model.

\subsection{Token tree}
For simplicity of presentation, we use $x$ to denote both the node in the token tree and the token assigned to it.  And its parent node is denoted as $Par(x)$. A \textit{token tree} is defined as follows:

\begin{definition}[Token Tree]\label{def:token-tree}
A token tree $\mathcal{T}$ is a tree structure, where
each node $x \in \mathcal{T}$ is associated with a token and its parent node.  

For each node $x$, let $\bm{S}_x$ denote the token sequence corresponding to the path from the root\footnote{The root is equivalent to prefix.} to $x$. 
Formally, $\bm{S}_x$ is defined recursively as
\[
    \bm{S}_x = 
    \begin{cases}
        \{x\}, & \text{if $x$ is the root}, \\[6pt]
        \bm{S}_{Par(x)} \oplus x, & \text{otherwise},
    \end{cases}
\]
where $\oplus$ denotes sequence concatenation.
\end{definition}

\begin{definition}[Expansion-configured Token Tree]
    A token tree is \textbf{expansion-configured} with $\mathcal{K}=<k_1,\cdots,k_L>$ if every node at layer $l$ ($0\leq l< L$) has exactly $k_{l+1}$ child nodes. And $\mathcal{K}$ is called \textbf{expansion configuration}.
\end{definition}

% This tree satisfies the following properties:
% \begin{enumerate}
%     \item At layer 0, there is only one node, $x_{0,1}$, denoting the prefix.
%     \item Every node at layer $i$ ($0 \le l < L$) has exactly $k_{l+1}$ child nodes.
%     \item The total number of nodes at layer $l$ is $n_l = n_{l-1} \cdot k_l, ~ l \ge 1$.
%     \item The set of child nodes for the $i$-th node at layer $l$, i.e., $x_{l,i}$, is denoted by $C(x_{l,i})$. These nodes are indexed consecutively from $x_{l+1, (i-1)\cdot k_{l+1}+1}$ to $x_{l+1, i \cdot k_{l+1}}$. That is
% \begin{equation}
%     C(x_{l,i}) = \left\{x_{l+1,j}| j=(i-1)\cdot k_{l+1}+1,\cdots,i \cdot k_{l+1}\right\}
% \end{equation}
% \end{enumerate}
This tree satisfies the following properties:
1). At layer 0, there is only one node, $x_{0,1}$, denoting the prefix.
2). Every node at layer $i$ ($0 \le l < L$) has exactly $k_{l+1}$ child nodes.
3). The total number of nodes at layer $l$ is 
$n_l = n_{l-1} \cdot k_l, ~ l \ge 1$.
4). The set of child nodes for the $i$-th node at layer $l$, i.e., $x_{l,i}$, is denoted by $C(x_{l,i})$. These nodes are indexed consecutively from $x_{l+1, (i-1)\cdot k_{l+1}+1}$ to $x_{l+1, i \cdot k_{l+1}}$. That is
\begin{equation}
    C(x_{l,i}) = \left\{x_{l+1,j}| j=(i-1)\cdot k_{l+1}+1,\cdots,i \cdot k_{l+1}\right\}
\end{equation}

Therefore, we have:
\begin{equation}
\begin{aligned}
    \mathcal{T} &= \bigcup_{l=0}^L \bigcup_{i=1}^{W_l} \{x_{l,i}\} = \{ x_{0,1} \} \;\cup\; \bigcup_{l=0}^{L-1} \;\bigcup_{i=1}^{W_l} \mathcal{C}(x_{l,i})\\
    &= \left \{x_{0,1}, x_{1,1},\cdots,x_{1,W_1},\cdots, x_{L,1},\cdots,x_{L,W_L} \right\}
\end{aligned}
\end{equation}
where $W_l = \prod_{j=1}^l k_j$ denotes the ``width'' of layer~$l$, i.e., the number of nodes in layer~$l$.

\begin{figure*}[htbp]
\begin{center}
\centerline{\includegraphics[width=0.9\textwidth]{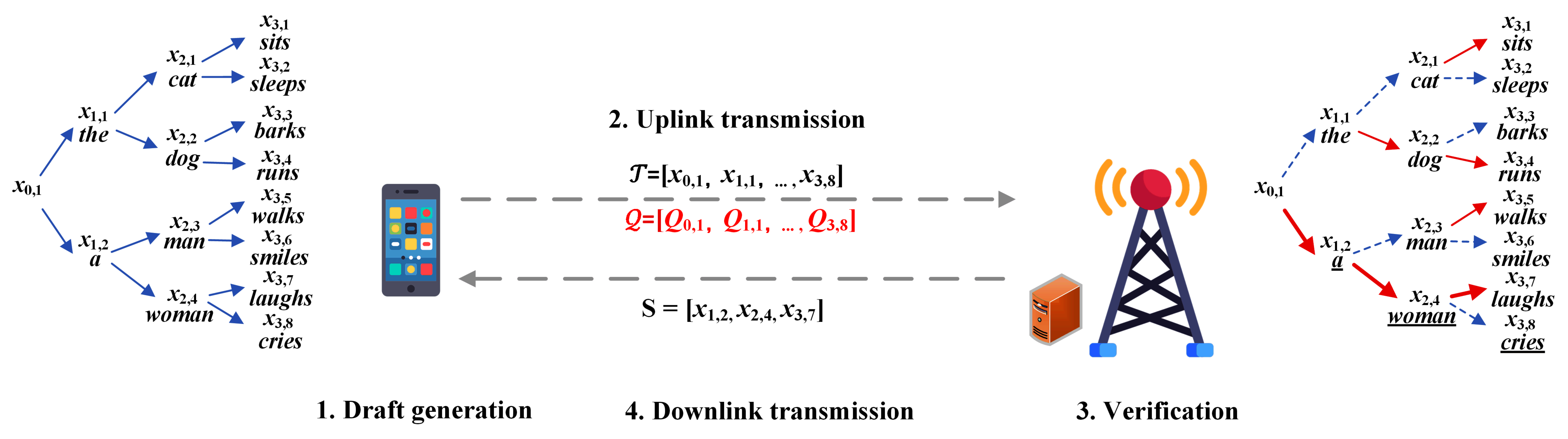}}
\caption{Multi-Candidate Distributed Speculative Decoding}
\label{fig:MU_DSD}
\end{center}
\end{figure*}
\subsection{Multi-candidate distributed speculative decoding}

Following \cite{yang2024multi}, in MC-DSD the device’s SLM generates multiple candidates 
$\{ x^{(1)}, \ldots, x^{(k)} \} \sim Q(x)$, 
which are produced according to Algorithm~\ref{alg:MCSample} and verified according to Algorithm~\ref{alg:MCVeri}. This ensures that the resulting probability distribution is identical to that obtained from standalone LLM inference, with proofs provided in \cite{yang2024multi}.

%MC--生成 with replacement
\begin{algorithm}[H]
\caption{$\textsc{MCSample}$}
\label{alg:MCSample}
\begin{algorithmic}
\STATE \text{\hspace{-1.2em} {\bfseries Input:}} $Q(x)$, $k$,
\FOR{$j=1$ \TO $k$}
    \STATE   $x^{(j)} \sim Q(x)$
\ENDFOR
\STATE \text{\hspace{-1.2em} {\bfseries Output:}} $C=\{x^{(1)},\cdots,x^{(k)}\}$
\end{algorithmic}
\end{algorithm}
% MC--验证
\begin{algorithm}[H]
\caption{$\textsc{MCVeri}$}
\label{alg:MCVeri}
\begin{algorithmic}
\STATE \hspace{-1.2em} {\bfseries Input:} $C=\{x^{(1)},\cdots,x^{(k)}\}$, $P(x)$, $Q(x)$
\STATE $P^{(1)}(x) = P(x)$
\FOR{$i=1$ \TO $k$}
    \STATE $r \sim U[0,1]$
    \IF{
    $r < \min \left\{ 1, \frac{p^{(i)}\left( x^{(i)} \right)}{q\left( x^{(i)} \right)} \right\}$
    }
    \STATE $\tilde{x} = x^{(i)}$
    \STATE \textbf{Output}: $\tilde{x}$
    \STATE \textbf{Terminate algorithm}
    \ELSE
    \STATE 
    \refstepcounter{equation}\label{eq:resample_update}%
$p^{(i+1)}(x) := \frac{\max\{0, p^{(i)}(x)-q(x)\}}{\sum_x \max\{0, p^{(i)}(x)-q(x)\}} \; (\theequation)$
    \ENDIF
\ENDFOR
\STATE $\tilde{x} \sim P^{(k+1)}(x)$
\STATE \hspace{-1.2em} {\bfseries Output:} $\tilde{x}$
\end{algorithmic}
\end{algorithm}

The whole process is fully described in Algorithms~\ref{alg:MU-DSD}, \ref{alg:TokTreeDraft}, and \ref{alg:TokenTreeVeri}:\\ \textit{1).} The device first generates a token tree -- $\mathcal{T}$ sampled from the distribution $\mathcal{Q}$ with the expansion configuration -- $\bm{\mathcal{K}}=<k_1, \cdots, k_L>$. \textit{2).} In the uplink transmission, the device sends $\mathcal{T}$ and $\mathcal{Q}$ to the BS, which dominates most of the communication time. \textit{3).} At BS, the LLM verifies the tokens and \textit{4).} sends back the verified token sequence -- $\mathbf{S}_{veri}$ for prefix update. 
In Algorithm.~\ref{alg:TokTreeDraft}, token tree generation is autoregressive across layers, since each layer depends on the tokens from the previous one, whereas nodes within the same layer are conditionally independent and can therefore be expanded in parallel. For each parent node $x_{l, i}$, its child nodes are sampled from $Q_{m,i}(x)$ with replacement.
A toy example is illustrated in Figure~\ref{fig:MU_DSD}, with the expansion configuration given by $\bm{\mathcal{K}} = \langle 2, 2, 2 \rangle$.

The oracle time becomes
\begin{align}
    T_{oracle} &= L \cdot T_{SLM} + T_{comm} + T_{LLM} \notag\\
    &= L \cdot T_{SLM} + |\mathcal{Q}| \cdot T_{\mathcal{V}} + T_{LLM}
\end{align}
where $L$ is the number of layers for $\mathcal{T}$, and $|\mathcal{Q}|$ is the cardinality of $\mathcal{Q}$:
\begin{equation}
    |\mathcal{Q}| = 1+\sum_{l=1}^{L-1} \prod_{i=1}^{l} k_i
\end{equation}

And the speedup ratio is
\begin{equation}
    S_{inf} = \frac{1}{ \frac{L \cdot T_{SLM}+ |\mathcal{Q}| \cdot T_{\mathcal{V}}}{(L\cdot\alpha+1)\cdot T_{LLM}}  +1}
\end{equation}
While MC-DSD increases the acceptance rate by sampling additional candidates, it also incurs higher communication latency due to the need to generate and transmit a larger number of tokens and probability distributions.
Our TSLT solution can also be applied to mitigate the communication overhead in MC-DSD. The following subsection investigates its effect on the acceptance rate.
\begin{algorithm}[ht]
\caption{Multi-Candidate DSD}
\label{alg:MU-DSD}
\begin{algorithmic}
% Device (Token tree generation):
\vspace{1mm}
\STATE \hspace{-1.2em} {\bf $\triangleright$ Device (Token Tree Drafting): }
\STATE \hspace{-1.2em} {\bfseries Input:} $\bm{S}_{prefix}$, SLM -- $M_q$, $\bm{\mathcal{K}}=<k_1, \cdots, k_L>$
\STATE  $\bm{\mathcal{T}}$ and $\bm{\mathcal{Q}} \gets \textsc{\bf TokTreeDraft}(\bm{S}_{prefix}, \bm{\mathcal{K}})$  
\STATE \hspace{-1.2em} {\bfseries Output:} $\bm{\mathcal{T}}$ and $\bm{\mathcal{Q}}$
\STATE \vspace{-2mm} \hspace{-1em}\hrulefill

% Uplink transmission:
\STATE \hspace{-1.2em} {\bf $\triangleright$ Uplink Transmission: }
    \STATE Send $\bm{\mathcal{T}_{no-root}}$ and $\bm{\mathcal{Q}}$ to the BS via uplink transmission\footnotemark
\STATE \vspace{-2mm} \hspace{-1em}\hrulefill

% Edge (Token tree verification)：
\STATE \hspace{-1.2em} {\bf $\triangleright$ Edge (Token Tree Verification): }
\STATE \hspace{-1.2em} {\bfseries Input:} $\bm{S}_{prefix}$, LLM -- $M_p$, $\bm{\mathcal{K}}$, $\bm{\mathcal{T}}$, $\bm{\mathcal{Q}}$
\STATE $\bm{S}_{veri} \gets \textsc{\bf TokTreeVeri}(\bm{S}_{prefix}, \bm{\mathcal{T}}, \bm{\mathcal{Q}}, \bm{\mathcal{K}})$
\STATE \hspace{-1.2em} {\bfseries Output:} $\mathbf{S}_{veri}$
\STATE \vspace{-2mm} \hspace{-1em}\hrulefill

% Downlink transmission：
\STATE \hspace{-1.2em} {\bf $\triangleright$ Downlink Transmission \& Prefix Update: }
\STATE Send $\bm{S}$ back to device\\
\STATE Update $\bm{S}_{prefix} = \bm{S}_{prefix} || \bm{S}_{veri}$ at both device and edge
\end{algorithmic}
\end{algorithm}\footnotetext{$\bm{\mathcal{T}_{no-root}} = \bm{\mathcal{T}}\setminus \{x_{0,1}\}$ because prefix already exists at BS}

\begin{algorithm}[ht]
\caption{$\textsc{TokTreeDraft}$}
\label{alg:TokTreeDraft}
\begin{algorithmic}
\STATE \hspace{-1.2em} {\bfseries Input:} $\bm{S}_{prefix}$, $\bm{\mathcal{K}}$
% \STATE $Q_0(x)\leftarrow M_q(S_{x_0})$;\\
% \STATE $\{x_{1,1}, \ldots, x_{1,k_1}\} \gets \textsc{SampleWoR}(Q_0(x), k_1)$;\\
% \FOR{$m=1$ {\bfseries to} $M-1$}
\FOR{$l=0$ \TO $L-1$}
\FOR{$i=1$ \TO $W_l$ \textbf{(in parallel)}}
    \STATE $Q_{l,i}(x)\leftarrow M_q \left( \bm{S}_{x_{l,i}} \right)$\footnotemark
    % \STATE $\{x_{m+1,k_{m+1}\cdot(i-1)+1}, \ldots, x_{m+1,k_{m+1}\cdot i}\} \gets \textsc{SampleWoR}(Q_{m,i}(x), k_{m+1})$;\\
    % \STATE Sample $\mathcal{C} \left( x_{l,i} \right)$ without replacement from $\gets \textsc{SampleWoR}(Q_{l,i}(x)$)
    \STATE $\mathcal{C} \left( x_{l,i} \right) \gets \textsc{\bf MCSample}(Q_{l,i}(x), k_l)$
\ENDFOR
\ENDFOR
\STATE \hspace{-1.2em} {\bfseries Output:} $\bm{\mathcal{T}} = \{x_{0,1}, x_{1,1},\cdots,x_{1,W_1},\cdots, x_{L,1},\cdots,x_{L,W_L} \}$, $\bm{\mathcal{Q}} = \{Q_{0,1}, Q_{1,1},\cdots,Q_{1,W_1},\cdots, Q_{L-1,1},\cdots,Q_{L-1,W_L-1}\}$
\end{algorithmic}
\end{algorithm}
\footnotetext{$\bm{S}_{prefix} = \bm{S}_{x_{0,1}}$}

\begin{algorithm}[ht]
\caption{$\textsc{TokenTreeVeri}$}
\label{alg:TokenTreeVeri}
\begin{algorithmic}
\STATE \hspace{-1.2em} {\bfseries Input:} $\bm{S}_{prefix}$, $\bm{\mathcal{T}}$, $\bm{\mathcal{Q}}$, $\bm{\mathcal{K}}$
\STATE $\mathcal{P} \gets M_p ( \bm{S}_{prefix} \oplus \mathcal{T})$
\STATE \hspace{-1.2em} {\bf $\triangleright$ Verify in parallel: } 
\FOR{$l=0$ \TO $L$ \textbf{and} $i=1$ \TO $W_l$ \textbf{(in parallel)}}
    % \STATE $ \gets \textsc{\color{red} MCVeri}(\mathcal{T},\mathcal{P}, \mathcal{Q})$
    \IF{$l \neq L$}
    \STATE $\tilde{x}_{l+1,i}\gets \textsc{\bf MCVeri}(C(x_{l,i}),P_{l,i}(x), Q_{l,i}(x))$
    \ELSE
    \STATE $Q_{l,i}(x)\leftarrow M_p \left( \bm{S}_{x_{l,i}} \right)$
    \STATE Sample $\tilde{x}_{l+1,i} \sim Q_{l,i}(x)$
    \ENDIF
\ENDFOR
\STATE \hspace{-1.2em} {\bf $\triangleright$ New Sequence Searching: } 
\STATE $\mathbf{S}=\mathbf{S}_{prefix}$, $l=0$, $j=1$
\FOR{$l$ \TO $L$}
    \STATE $i = j$
    \STATE $\hat{x}_{l+1} = \tilde{x}_{l+1,i}$
    \STATE $\mathbf{S} = \mathbf{S} \oplus \hat{x}_{l+1}$
    \IF{ $\tilde{x}_{l+1,i} \in C(x_{l,i})$ }
        \STATE $j = \text{pos}\left(\tilde{x}_{l+1,i}|C(x_{l,i})\right)$\footnotemark
        \STATE $l=l+1$
    \ELSE
        \STATE {\bfseries Output:} $\mathbf{S}_{veri} = \hat{x}_1 \oplus...\oplus \hat{x}_{l+1}$
        \STATE \textbf{Terminate algorithm}
    \ENDIF
\ENDFOR
    \STATE $\hat{x}_{l+1} = \tilde{x}_{l+1,j}$
\STATE \hspace{-1.2em} {\bfseries Output:} $\mathbf{S}_{veri} = \hat{x}_1 \oplus...\oplus \hat{x}_{l+1}$
\end{algorithmic}
\end{algorithm}

\subsection{Acceptance rate in MC-DSD}
\begin{definition}
    For multi-candidate speculative decoding with $k$ candidates, we define $\beta^{(j)}$ as the acceptance rate of the $j$-th candidate conditioned on that candidates $x^{(1)},\cdots,x^{(j-1)}$ are rejected:
    \begin{equation}
        \beta^{(j)} = Pr \left[ x^{(j)} \text{accepted} \mid x^{(1)},\cdots,x^{(j-1)} \text{rejected} \right].
    \end{equation}
\end{definition}
Therefore, we have
\begin{lemma}
    The acceptance rate of the $j$-th candidate token is 
    \begin{equation}
        \beta^{(j)} =  \sum_{x \in \mathcal{V} } \min \left\{ q(x),p^{(j)}(x) \right\} = 1-TV(Q,P^{(j)})
    \end{equation}
\end{lemma}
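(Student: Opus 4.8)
The plan is to reduce the claim to the single-candidate acceptance-rate identity stated earlier (the Lemma on Acceptance Rate and TV Distance), applied to the $j$-th candidate against its residual target distribution $P^{(j)}$. First I would observe that, since the candidates $x^{(1)}, \ldots, x^{(k)}$ are drawn i.i.d.\ from $Q$ with replacement (Algorithm~\ref{alg:MCSample}), the marginal law of $x^{(j)}$ is exactly $Q$ and is independent of the acceptance/rejection outcomes of the preceding candidates. Moreover, the residual-update rule in Algorithm~\ref{alg:MCVeri}, namely \eqref{eq:resample_update}, defines $P^{(i+1)}$ as a deterministic function of $P^{(i)}$ and $Q$ alone, with no dependence on the particular value of the rejected token $x^{(i)}$. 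Hence, conditioned on the event that $x^{(1)}, \ldots, x^{(j-1)}$ are all rejected, the verification of $x^{(j)}$ is carried out against the fixed distribution $P^{(j)}$ using the acceptance test $r < \min\{1,\, p^{(j)}(x^{(j)})/q(x^{(j)})\}$ with $r \sim U[0,1]$.

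Next I would compute the conditional acceptance probability by conditioning on the sampled token. For a fixed $x^{(j)} = x$ with $q(x) > 0$, the probability that $r \sim U[0,1]$ passes the test equals $\min\{1,\, p^{(j)}(x)/q(x)\}$. Averaging over $x \sim Q$ then gives
\begin{equation}
\beta^{(j)} = \sum_{x \in \mathcal{V}} q(x) \min\left\{1, \frac{p^{(j)}(x)}{q(x)}\right\} = \sum_{x \in \mathcal{V}} \min\left\{q(x), p^{(j)}(x)\right\},
\end{equation}
where tokens with $q(x) = 0$ carry zero sampling mass and hence do not affect the identity.

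Finally, to obtain the second equality I would apply the elementary relation $\min\{a,b\} = \frac{1}{2}\left(a + b - |a-b|\right)$ termwise. Summing over $x$ and using that $Q$ and $P^{(j)}$ are probability distributions (each of total mass $1$) yields
\begin{equation}
\sum_{x \in \mathcal{V}} \min\{q(x), p^{(j)}(x)\} = 1 - \frac{1}{2}\sum_{x \in \mathcal{V}} \left| q(x) - p^{(j)}(x) \right| = 1 - TV(Q, P^{(j)}),
\end{equation}
which is precisely the claimed expression.

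The argument is essentially the single-candidate proof repeated at the $j$-th level, so no individual step is technically demanding. The main point requiring care is the conditioning: I must justify that the law of $x^{(j)}$ remains $Q$ under the event that the first $j-1$ candidates were rejected, and that the correct comparison distribution is the deterministic residual $P^{(j)}$ rather than one that depends on which tokens were rejected. This rests on the i.i.d.\ sampling in Algorithm~\ref{alg:MCSample} together with the fact that the residual update in \eqref{eq:resample_update} is independent of the rejected token's value; verifying these two independence properties is the only nontrivial part of the proof.
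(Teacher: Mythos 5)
Your proposal is correct and follows essentially the same route as the paper, whose proof is a one-line appeal to Algorithm~\ref{alg:MCVeri} giving $\beta^{(j)} = \sum_{x} q(x)\min\{1, p^{(j)}(x)/q(x)\} = \sum_{x}\min\{q(x), p^{(j)}(x)\}$. You additionally spell out the conditioning argument (that $x^{(j)}$ retains law $Q$ given prior rejections and that $P^{(j)}$ is a deterministic residual) and the $\min\{a,b\}=\tfrac{1}{2}(a+b-|a-b|)$ identity for the total-variation form, both of which the paper leaves implicit.
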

\begin{proof}
    According to Algorithm~\ref{alg:MCVeri}, we have
    \begin{equation}
        \beta^{(j)} = \sum_{x \in \mathcal{V} } q(x) \min \left\{ 1,\frac{p^{(j)}(x)}{q(x)} \right\} = \sum_{x \in \mathcal{V} } \min \left\{ q(x),p^{(j)}(x) \right\}
    \end{equation}
\end{proof}

We first establish several useful lemmas and corollary that will be used to prove Theorem~\ref{thm:acceptance_MC-DSD}.
\begin{lemma}
\label{lemma:TV_MC-DSD}
Let  $P$, $Q$, $\hat{P}$ and $\hat{Q}$ be four probability distributions over the same sample space. Then the total variation distances satisfy
\begin{equation}
\label{eq:Lemma_TV_MC-DSD}
\big|\mathrm{TV}(\hat{Q},\hat{P}) - \mathrm{TV}(Q,P)\big| \le \mathrm{TV}(\hat{Q},Q) + \mathrm{TV}(\hat{P},P),
\end{equation}
\end{lemma}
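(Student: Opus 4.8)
The plan is to reduce this four-distribution bound to two applications of the three-distribution inequality already established in Lemma~\ref{lemma:TVD}, exploiting the symmetry $\mathrm{TV}(A,B)=\mathrm{TV}(B,A)$ of total variation distance. The key observation is that Lemma~\ref{lemma:TVD} controls the change in TV distance when \emph{one} of the two arguments is perturbed from $Q$ to $\hat{Q}$ while the other is held fixed; to move from $\mathrm{TV}(Q,P)$ all the way to $\mathrm{TV}(\hat{Q},\hat{P})$ I simply perturb the two arguments one at a time and add the two resulting bounds.

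Concretely, I would first insert the intermediate quantity $\mathrm{TV}(Q,\hat{P})$ and apply the ordinary triangle inequality for real numbers:
\begin{equation}
\big|\mathrm{TV}(\hat{Q},\hat{P}) - \mathrm{TV}(Q,P)\big| \le \big|\mathrm{TV}(\hat{Q},\hat{P}) - \mathrm{TV}(Q,\hat{P})\big| + \big|\mathrm{TV}(Q,\hat{P}) - \mathrm{TV}(Q,P)\big|.
\end{equation}
For the first term I apply Lemma~\ref{lemma:TVD} with the fixed argument taken to be $\hat{P}$ (playing the role of $P$ in that lemma) and the perturbed argument running from $Q$ to $\hat{Q}$, which bounds it by $\mathrm{TV}(\hat{Q},Q)$. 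For the second term I invoke symmetry to rewrite $\mathrm{TV}(Q,\hat{P})=\mathrm{TV}(\hat{P},Q)$ and $\mathrm{TV}(Q,P)=\mathrm{TV}(P,Q)$, then apply Lemma~\ref{lemma:TVD} again, now with the fixed argument $Q$ and the perturbed argument running from $P$ to $\hat{P}$, bounding it by $\mathrm{TV}(\hat{P},P)$. Summing the two bounds yields exactly \eqref{eq:Lemma_TV_MC-DSD}.

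There is essentially no hard step: the whole content is the metric-like behaviour of total variation distance, and Lemma~\ref{lemma:TVD} already packages the single-argument perturbation estimate. The only point requiring a little care is the bookkeeping of which argument is held fixed in each invocation of Lemma~\ref{lemma:TVD}, together with the explicit use of symmetry to bring the second invocation into the form the lemma requires. An equally short alternative bypasses Lemma~\ref{lemma:TVD} and uses directly that TV is a genuine metric: two applications of its triangle inequality give $\mathrm{TV}(\hat{Q},\hat{P}) \le \mathrm{TV}(\hat{Q},Q)+\mathrm{TV}(Q,P)+\mathrm{TV}(P,\hat{P})$ and, symmetrically, $\mathrm{TV}(Q,P)\le \mathrm{TV}(Q,\hat{Q})+\mathrm{TV}(\hat{Q},\hat{P})+\mathrm{TV}(\hat{P},P)$, and rearranging either chain and combining yields the claim. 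I would favour the first route, since it reuses machinery the paper has already proved rather than re-deriving the triangle inequality for $\mathrm{TV}$.
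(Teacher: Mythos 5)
Your proposal is correct. Your preferred route differs from the paper's in its mechanics, though not in substance: you insert the intermediate real number $\mathrm{TV}(Q,\hat{P})$, apply the scalar triangle inequality, and then invoke the already-proved Lemma~\ref{lemma:TVD} twice (once perturbing $Q\to\hat{Q}$ with $\hat{P}$ fixed, once perturbing $P\to\hat{P}$ with $Q$ fixed, using symmetry of $\mathrm{TV}$). The paper instead works pointwise, writing $q(x)-p(x)$ as the telescoping sum $[q(x)-\hat{q}(x)]+[\hat{q}(x)-\hat{p}(x)]+[\hat{p}(x)-p(x)]$, summing the resulting triangle inequality over $x$ to get the one-sided chain $\mathrm{TV}(Q,P)\le \mathrm{TV}(Q,\hat{Q})+\mathrm{TV}(\hat{Q},\hat{P})+\mathrm{TV}(\hat{P},P)$, and then appealing to the symmetric interchange of $(Q,P)$ with $(\hat{Q},\hat{P})$ to obtain the absolute bound --- which is precisely the ``alternative'' you sketch in your last few lines. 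Your modular route buys reuse of Lemma~\ref{lemma:TVD} and avoids re-deriving the metric property of $\mathrm{TV}$; the paper's route is self-contained and makes the underlying pointwise telescoping explicit. Both are two applications of the triangle inequality for a metric, and either would be acceptable here; your bookkeeping of which argument is held fixed in each invocation is the only delicate point, and you handle it correctly.
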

\begin{proof}
    See Appendix~\ref{app:proof_TV_MC-DSD} for the full proof.
\end{proof}

Let $P^{(i)}(x)$ and $Q(x)$ denote the probability distributions in \eqref{eq:resample_update} before applying truncated sampling, and let $\hat{P}^{(i)}(x)$ and $\hat{Q}(x)$ denote the corresponding distributions after the TSLT solution Then we have
\begin{lemma}\label{lemma_TV_recurse}
The total variance between $P^{i}(x)$ and $\hat{P}^{i}(x)$ is bounded by
\begin{equation}\label{eq:TV_PP_Bound}
    TV(P^{(i)},\hat{P}^{(i)}) \leq \frac{2}{Z^{(i)}}\left[ TV(P^{(i-1)}, \hat{P}^{(i-1)}) + TV(Q,\hat{Q})\right],
\end{equation}
where $\hat{Z}^{(i)} =TV(P^{(i-1)},\hat{Q})$.
% $\sum_{x \in \mathcal{V}}\max \{0, \hat{p}^{(i-1)}(x)-\hat{q}(x)\}$
\end{lemma}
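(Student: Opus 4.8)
The plan is to treat both $P^{(i)}$ and $\hat{P}^{(i)}$ as renormalizations of the unnormalized residual vectors produced by the update rule \eqref{eq:resample_update}. Writing $f(x) = \max\{0, p^{(i-1)}(x) - q(x)\}$ and $\hat{f}(x) = \max\{0, \hat{p}^{(i-1)}(x) - \hat{q}(x)\}$, the two distributions are $p^{(i)} = f/Z^{(i)}$ and $\hat{p}^{(i)} = \hat{f}/\hat{Z}^{(i)}$, where $Z^{(i)} = \sum_x f(x)$ and $\hat{Z}^{(i)} = \sum_x \hat{f}(x)$ are the normalizers. First I would record the elementary identity $\sum_x \max\{0, a(x) - b(x)\} = \mathrm{TV}(A, B)$, valid for any two distributions $A, B$ (it follows from $\sum_x(a-b)=0$), which identifies $Z^{(i)} = \mathrm{TV}(P^{(i-1)}, Q)$ and lets the final bound be phrased through a total variation distance rather than a bare sum.

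The core estimate is a perturbation bound for normalized vectors. Adding and subtracting $\hat{f}/Z^{(i)}$ gives
\[
\frac{f}{Z^{(i)}} - \frac{\hat{f}}{\hat{Z}^{(i)}} = \frac{f - \hat{f}}{Z^{(i)}} + \hat{f}\left(\frac{1}{Z^{(i)}} - \frac{1}{\hat{Z}^{(i)}}\right).
\]
Summing absolute values, using $\sum_x \hat{f} = \hat{Z}^{(i)}$ to collapse the second term to $|\hat{Z}^{(i)} - Z^{(i)}|/Z^{(i)}$, and then bounding $|\hat{Z}^{(i)} - Z^{(i)}| = |\sum_x (\hat{f} - f)| \le \sum_x |f - \hat{f}|$, I obtain $\sum_x |p^{(i)} - \hat{p}^{(i)}| \le \frac{2}{Z^{(i)}} \sum_x |f - \hat{f}|$, hence $\mathrm{TV}(P^{(i)}, \hat{P}^{(i)}) \le \frac{1}{Z^{(i)}} \sum_x |f - \hat{f}|$.

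It then remains to control $\sum_x |f - \hat{f}|$. Here I would invoke the fact that $t \mapsto \max\{0, t\}$ is $1$-Lipschitz, so that
\[
|f(x) - \hat{f}(x)| \le \bigl|(p^{(i-1)}(x) - \hat{p}^{(i-1)}(x)) - (q(x) - \hat{q}(x))\bigr| \le |p^{(i-1)}(x) - \hat{p}^{(i-1)}(x)| + |q(x) - \hat{q}(x)|,
\]
and summing yields $\sum_x |f - \hat{f}| \le 2\,\mathrm{TV}(P^{(i-1)}, \hat{P}^{(i-1)}) + 2\,\mathrm{TV}(Q, \hat{Q})$. Substituting into the previous display produces exactly $\mathrm{TV}(P^{(i)}, \hat{P}^{(i)}) \le \frac{2}{Z^{(i)}}\bigl[\mathrm{TV}(P^{(i-1)}, \hat{P}^{(i-1)}) + \mathrm{TV}(Q, \hat{Q})\bigr]$, as claimed.

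I expect the main obstacle to be the bookkeeping around the two distinct normalizers $Z^{(i)}$ and $\hat{Z}^{(i)}$: the decomposition above is deliberately asymmetric (it factors through $1/Z^{(i)}$), so one must verify that the normalizer-difference term is genuinely absorbed into the same $\ell_1$ quantity rather than requiring a separate hypothesis. A secondary point worth pinning down is the precise reference normalizer in the stated bound: since the chains start from $\hat{P}^{(1)} = P^{(1)} = P$, the hatted normalizer is $\hat{Z}^{(i)} = \mathrm{TV}(\hat{P}^{(i-1)}, \hat{Q})$, which coincides with $\mathrm{TV}(P^{(i-1)}, \hat{Q})$ at the base of the recursion; choosing $Z^{(i)} = \mathrm{TV}(P^{(i-1)}, Q)$ as the reference keeps the right-hand side expressed through the unhatted chain so that the recursion closes cleanly.
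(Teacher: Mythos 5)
Your proposal is correct and follows essentially the same route as the paper's Appendix proof: the same split of $f/Z^{(i)}-\hat{f}/\hat{Z}^{(i)}$ into a numerator-difference term plus a normalizer-difference term, the same absorption of $|Z^{(i)}-\hat{Z}^{(i)}|$ into $\sum_x|f-\hat{f}|$ to get the factor $2/Z^{(i)}$, and the same $1$-Lipschitz bound on $\max\{0,\cdot\}$ to close the recursion. Your closing remark correctly pins down the normalizer as $Z^{(i)}=\mathrm{TV}(P^{(i-1)},Q)$, which in fact fixes an index slip in the paper's own statement of the identity.
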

\begin{proof}
     See Appendix~\ref{app:proof_TV_recurse} for the full proof.
\end{proof}
\begin{corollary}
\label{cor:1}
For $i=2$
\begin{equation}\label{eq:TV_PP2_Bound}
    TV(P^{(2)},\hat{P}^{(2)}) \leq \frac{2}{Z^{(1)}} TV(Q,\hat{Q}).
\end{equation}
\end{corollary}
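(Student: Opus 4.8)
The plan is to obtain this corollary as the $i=2$ base case of the recursive inequality already established in Lemma~\ref{lemma_TV_recurse}, so essentially no new estimation is required. Instantiating that recursion at $i=2$ gives
\begin{equation}
\mathrm{TV}(P^{(2)},\hat{P}^{(2)}) \le \frac{2}{Z^{(2)}}\left[\mathrm{TV}(P^{(1)},\hat{P}^{(1)}) + \mathrm{TV}(Q,\hat{Q})\right],
\end{equation}
so the only things left to justify are that the first term inside the bracket vanishes and that $Z^{(2)}$ is the constant named $Z^{(1)}$ in the corollary.

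First I would pin down the base distributions. By the initialization in Algorithm~\ref{alg:MCVeri}, $P^{(1)}(x) = P(x)$ is exactly the LLM's own output distribution feeding the first acceptance test. The crucial observation is that the TSLT solution truncates and renormalizes only the draft model's distribution $Q$ into $\hat{Q}$; the verifier still evaluates its full logits, so the LLM distribution entering the first step is left untouched. Hence $\hat{P}^{(1)} = P^{(1)} = P$, and therefore $\mathrm{TV}(P^{(1)},\hat{P}^{(1)}) = 0$. Substituting this into the instantiated recursion collapses the bracket to $\mathrm{TV}(Q,\hat{Q})$, yielding
\begin{equation}
\mathrm{TV}(P^{(2)},\hat{P}^{(2)}) \le \frac{2}{Z^{(2)}}\,\mathrm{TV}(Q,\hat{Q}),
\end{equation}
which is the claimed bound up to the labelling of the normalizer.

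Finally I would reconcile the constant. Using $\hat{Z}^{(i)} = \mathrm{TV}(P^{(i-1)},\hat{Q})$ from Lemma~\ref{lemma_TV_recurse}, the $i=2$ normalizer is $\mathrm{TV}(P^{(1)},\hat{Q})$, which is precisely the quantity the corollary denotes $Z^{(1)}$ (indexing the residual-step normalizer by its input distribution $P^{(1)}$ rather than by the output step). This identification completes the argument. The main obstacle I anticipate is not any calculation but the index bookkeeping: the lemma's bound, its stated constant $\hat{Z}^{(i)}$, and the corollary all use slightly different labels for the same normalizer, so one must verify carefully that $Z^{(1)}$ here equals $\mathrm{TV}(P^{(1)},\hat{Q})$ and not the undefined $\mathrm{TV}(P^{(0)},\hat{Q})$. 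The genuine content is the structural fact that the recursion's base case carries a vanishing $\mathrm{TV}(P^{(1)},\hat{P}^{(1)})$ because TSLT perturbs only the draft distribution, and this is what reduces the general two-term recursion to the single-term bound stated in the corollary.
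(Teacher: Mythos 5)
Your argument matches the paper's own reasoning: the proof of Lemma~\ref{lemma_TV_recurse} in the appendix closes with precisely the observation that $TV(P^{(1)},\hat{P}^{(1)})=0$ since $P^{(1)}=\hat{P}^{(1)}=P$ (TSLT perturbs only the draft distribution), so the corollary is the $i=2$ instance of the recursion with the first bracket term dropped. Your handling of the normalizer is also the right call --- the mismatch between the lemma's $Z^{(2)}$ and the corollary's $Z^{(1)}$ is an indexing inconsistency in the paper itself, the intended quantity being $\sum_{x}\max\{0,p^{(1)}(x)-q(x)\}$.
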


With Lemma~\ref{lemma:TV_MC-DSD} and Lemma~\ref{lemma_TV_recurse} established, we obtain the following theorem.
\begin{theorem}\label{thm:acceptance_MC-DSD}
    Let $\mathcal{V'} = \{ x^{[1]}, \dots, x^{[K]} \} \subset \mathcal{V}$ be the set of Top-$K$ tokens selected from the distribution $Q(x)$ by truncated sampling, with the total probability of the selected and remaining sets being $\rho = \sum_{v \in \mathcal{V'}} q(v)$ and $\sigma = \sum_{v \notin \mathcal{V'}} q(v)$, respectively. Let $\hat{Q}(x)$ denote the resulting distribution after truncation and renormalization over $\mathcal{V'}$. Then, the following inequality holds:
\begin{equation}
    \left| \hat{\beta}^{(j)} - \beta^{(j)} \right| \leq \left[\sum_{k=1}^{j-1} \left( \prod_{s=k+1}^{j} \frac{2}{Z^{(s)}} \right)+1 \right]\sigma,
\end{equation}
where $\beta^{(j)}$ and $\hat{\beta}^{(j)}$ are the acceptance rates under the original distribution $Q(x)$ and the truncated distribution $\hat{Q}(x)$ for the $j$-th candidate, respectively.
\end{theorem}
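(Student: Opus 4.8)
The plan is to reduce the claim to a bound on the total-variation distance between the resampled LLM distributions and then to unroll the recursion from Lemma~\ref{lemma_TV_recurse}. First I would rewrite the left-hand side using the acceptance-rate identity $\beta^{(j)} = 1 - \mathrm{TV}(Q, P^{(j)})$ together with its truncated counterpart $\hat{\beta}^{(j)} = 1 - \mathrm{TV}(\hat{Q}, \hat{P}^{(j)})$, so that
\[
\left|\hat{\beta}^{(j)} - \beta^{(j)}\right| = \left|\mathrm{TV}(\hat{Q}, \hat{P}^{(j)}) - \mathrm{TV}(Q, P^{(j)})\right|.
\]
Applying Lemma~\ref{lemma:TV_MC-DSD} with the four distributions $Q,\,P^{(j)},\,\hat{Q},\,\hat{P}^{(j)}$ then yields
\[
\left|\hat{\beta}^{(j)} - \beta^{(j)}\right| \le \mathrm{TV}(\hat{Q}, Q) + \mathrm{TV}(\hat{P}^{(j)}, P^{(j)}).
\]
The first term is already controlled: by the single-candidate analysis in Theorem~\ref{theorem:acceptance_rate}, specifically \eqref{eq:Lemma_TVQQ}, we have $\mathrm{TV}(\hat{Q}, Q) = \sigma$, which contributes precisely the additive $+1$ multiplying $\sigma$ in the final bound.

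The remaining work is to bound $\mathrm{TV}(\hat{P}^{(j)}, P^{(j)})$. Here I would set $a_i := \mathrm{TV}(P^{(i)}, \hat{P}^{(i)})$ and invoke Lemma~\ref{lemma_TV_recurse}, which (after substituting $\mathrm{TV}(Q,\hat{Q}) = \sigma$) gives the one-step recursion $a_i \le \tfrac{2}{Z^{(i)}}\left(a_{i-1} + \sigma\right)$. The crucial base case is $a_1 = 0$: since the first candidate is verified against the raw LLM distribution $P^{(1)} = P$ from Algorithm~\ref{alg:MCVeri}, which does not depend on the SLM output, truncating only $Q$ leaves it untouched, so $\hat{P}^{(1)} = P^{(1)}$ and $\mathrm{TV}(P^{(1)}, \hat{P}^{(1)}) = 0$. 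This is exactly the observation underlying Corollary~\ref{cor:1}.

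I would then close the argument by induction on $i$, unrolling the recursion from $a_1 = 0$ up to $a_j$. A short induction gives
\[
a_j \le \left[\sum_{k=1}^{j-1} \prod_{s=k+1}^{j} \frac{2}{Z^{(s)}}\right]\sigma,
\]
which is the expected solution of a linear recursion with a constant forcing term: each application multiplies the accumulated error by $\tfrac{2}{Z^{(s)}}$ and injects a fresh $\sigma$, so the error at stage $j$ is a weighted sum of the injected $\sigma$ terms with weights equal to products of the per-step amplification factors. Substituting this into the earlier display and adding $\mathrm{TV}(\hat{Q}, Q) = \sigma$ yields the stated inequality.

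The main obstacle I anticipate is not the algebra of the unrolling but pinning down the base case and the index alignment rigorously: one must verify that truncating $Q$ (and not $P$) leaves $P^{(1)}$ intact, and that the recursion index in Lemma~\ref{lemma_TV_recurse} matches the product range $s = k+1, \dots, j$ appearing in the claim. A secondary subtlety is that the amplification factors $\tfrac{2}{Z^{(s)}}$ may exceed one whenever $Z^{(s)} = \mathrm{TV}(P^{(s-1)}, \hat{Q})$ is small, so the bound is informative only when the residual resampling masses stay bounded away from zero; I would flag this dependence, though it does not affect the validity of the inequality itself.
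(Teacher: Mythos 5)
Your proposal is correct and follows essentially the same route as the paper's proof: both decompose $\bigl|\hat{\beta}^{(j)}-\beta^{(j)}\bigr|=\bigl|\mathrm{TV}(\hat{Q},\hat{P}^{(j)})-\mathrm{TV}(Q,P^{(j)})\bigr|$ via Lemma~\ref{lemma:TV_MC-DSD}, use $\mathrm{TV}(\hat{Q},Q)=\sigma$ for the additive term, and unroll the recursion of Lemma~\ref{lemma_TV_recurse} from the base case $P^{(1)}=\hat{P}^{(1)}$ to bound $\mathrm{TV}(\hat{P}^{(j)},P^{(j)})$. Your explicit induction with $a_1=0$ simply spells out what the paper compresses into ``combining Lemma~\ref{lemma_TV_recurse} and Corollary~\ref{cor:1},'' and your index bookkeeping is consistent with Lemma~\ref{lemma_TV_recurse} as stated.
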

\begin{proof}
Combing Lemma~\ref{lemma_TV_recurse} and \ref{cor:1}, we have
\begin{equation}
TV(P^{(j)},\hat{P}^{(j)}) \le 
\sum_{k=1}^{j-1} \left( \prod_{s=k+1}^{j} \frac{2}{Z^{(s)}} \right) \sigma.
\end{equation}
Then, 
\begin{align}\label{eq:bound_TV_MC}
     \left| \hat{\beta}^{(j)} - \beta^{(j)} \right|
    =& | TV(\hat{Q},\hat{P}^{(j)}) - TV(Q,P^{(j)}) |\notag\\
    \leq& \mathrm{TV}(\hat{Q},Q) + \mathrm{TV}(\hat{P}^{(j)},P^{(j)}) \notag\\
    \leq& \left[\sum_{k=1}^{j-1} \left( \prod_{s=k+1}^{j} \frac{2}{Z^{(s)}} \right)+1 \right]\sigma
\end{align}
\end{proof}

This inequality demonstrates the robustness of the acceptance rate $\hat{\beta}^{(j)}$ under truncated sampling, as the deviation $|\hat{\beta}^{(j)}-\beta^{(j)}|$ is tightly controlled by the expected truncated probability mass in \eqref{eq:bound_TV_MC}. 

Then the expected acceptance rate for the $j$-th candidate is
\begin{equation}
    \alpha^{(j)} = \mathbb{E}[\beta^{(j)}]
\end{equation}

Total acceptance rate is 
\begin{align}
\theta = \beta^{(1)}+\sum_{i=2}^{k} \beta^{(i)} \prod_{j=1}^{i-1} (1-\beta^{(j)}),
\end{align}

And the final acceptance rate is
\begin{align}
    \alpha = \mathbb{E}[\theta] \approx \alpha^{(1)}+\sum_{i=2}^{k} \alpha^{(i)} \prod_{j=1}^{i-1} (1-\alpha^{(j)}),
\end{align}

\section{Experiment}
\label{sec:experiment}
We deploy two models on separate NVIDIA A800 80GB GPUs: a \textbf{68M-Llama} model for draft token generation and a \textbf{7B-Llama} model for verification. The vocabulary size is set to $32$K~\cite{touvron2023llama} with FP16, i.e., $D_{\mathcal{V}} = 0.5$~Mbit. For truncated sampling, we employ a Top-$K$ strategy. The expansion configuration is specified as $\bm{\mathcal{K}} = \langle 2, 2, 2 \rangle$, indicating that up to two candidate tokens are considered at each speculative decoding step. 
Table~\ref{tab:exp_settings} summarizes the experimental setup.
\begin{table}[htbp]
\centering
\caption{Summary of experimental settings.}
\label{tab:exp_settings}
\begin{tabular}{l l}
\toprule
\textbf{Parameter} & \textbf{Value} \\
\midrule
SLM & 68M-Llama \\
LLM & 7B-Llama \\
GPU & 2~NVIDIA A800 80GB \\
Vocabulary size & 32K~\cite{touvron2023llama} \\
Precision & FP16 ($D_{\mathcal{V}} = 0.5$~Mbit) \\
Truncated sampling & Top-$K$ \\
Speculative decoding expansion & $\bm{\mathcal{K}} = \langle 2, 2, 2 \rangle$ \\
\bottomrule
\end{tabular}
\end{table}

We first compare the acceptance rate and the retained probability mass under different truncated set sizes ($K$). Next, we evaluate the inference latency to quantify the efficiency gains of the Truncated Sparse Logits Transmission (TSLT) scheme. Specifically, we consider the inference latency under varying $K$ and uplink transmission rates.

The total inference time for distributed speculative decoding (DSD) can be expressed as:
\begin{align}
    T_{DSD} = T_{comp} + N_{\text{oracle}} \cdot T_{comm},
\end{align}
where $T_{comp}$ denotes the computation time measured directly on the target hardware, and $N_{\text{oracle}}$ is the number of oracle calls for a given task. The communication time per oracle, $T_{comm}$, is estimated via simulation. For Vanilla DSD, $T_{comm} = L \cdot D_{\mathcal{V}} / R_{up}$, while for MC-DSD, $T_{comm} = |\mathcal{Q}| \cdot D_{\mathcal{V}} / R_{up}$, where $D_{\mathcal{V}}$ is the size of the transmitted logits and $R_{up}$ is the uplink transmission rate.
% \\
% $D_{\mathcal{V}} = 0.5$~Mbit (32000)\\
% $D_{\mathcal{V}} = 0.05$~Mbit (3200)\\
% $D_{\mathcal{V}} = 0.005$~Mbit (320)\\

% $L=7$

% $t_{comm} = L \cdot D_{\mathcal{V}} / R_{up}$

% $R_{up}: 0-50$
% where $T_{comp}$ denotes the computation time measured directly on the target hardware, and $N_{\text{oracle}}$ is the number of oracle calls for a given task. The communication time per oracle, $t_{comm}$, is estimated via simulation. For Vanilla DSD, $t_{comm} = L \cdot D_{\mathcal{V}} / R_{up}$, while for MC-DSD, $t_{comm} = |\mathcal{Q}| \cdot D_{\mathcal{V}} / R_{up}$, where $D_{\mathcal{V}}$ is the size of the transmitted logits and $R_{up}$ is the uplink transmission rate.

\subsection{Top-$K$ probability mass}
To evaluate how much of the original distribution is preserved under truncated sampling, we first examine the retained probability mass across different truncated set sizes ($K$).
\begin{figure}
    \centering
    \includegraphics[width=\linewidth]{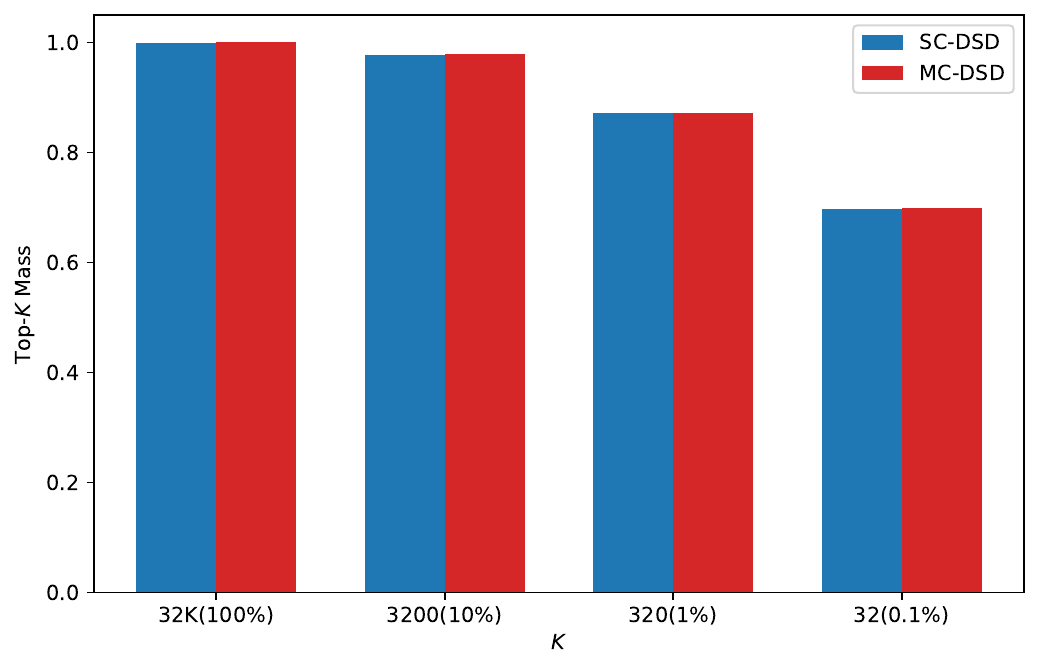}
    \caption{Top-$K$ probability mass different $K$}
    \label{fig:probability_mass}
\end{figure}

Figure~\ref{fig:probability_mass} illustrates the retained probability mass under different truncated set sizes ($K$). As observed, when $K=3{,}200$ (10\% of the vocabulary), the probability mass remains almost unchanged, indicating that the majority of the distribution is captured. Even when $K$ is reduced to $320$ (about 1\%), the retained probability mass stays relatively high, around 0.85, demonstrating that a small truncated set is sufficient to preserve most of the model's predictive distribution. This highlights the effectiveness of the Truncated Sparse Logits Transmission (TSLT) scheme in significantly reducing communication overhead while maintaining inference fidelity.

\subsection{Acceptance rate}

In this subsection, we analyze the acceptance rate of different decoding strategies under varying Top-$K$ settings, which serves as a key indicator of the effectiveness and robustness of our proposed TSLT solution.
\begin{figure}
    \centering
    \includegraphics[width=\linewidth]{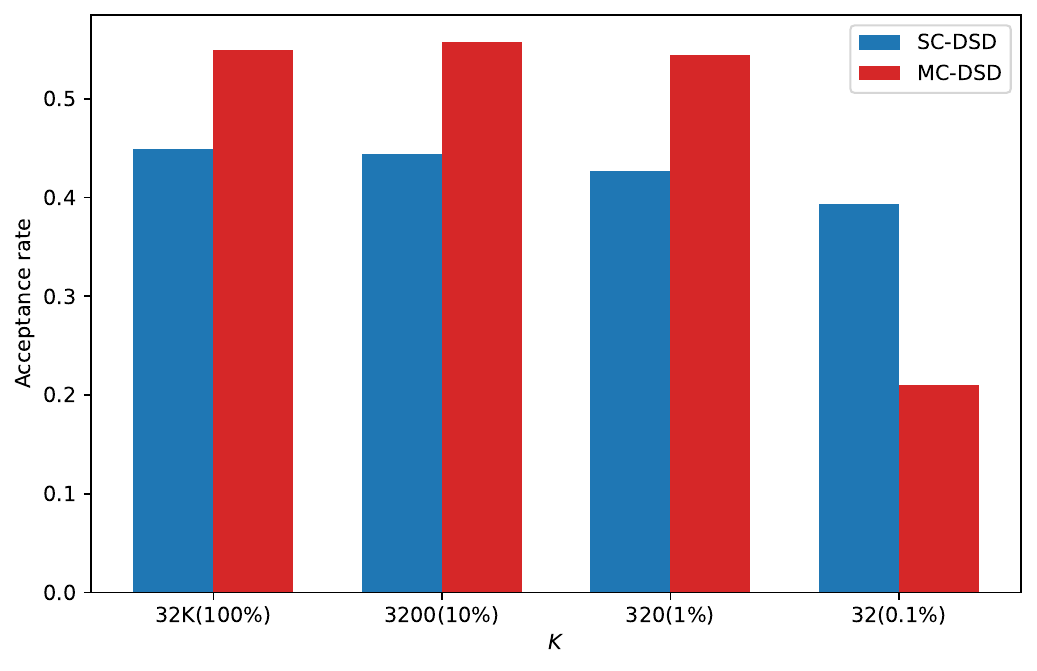}
    \caption{Acceptance rate under different $K$}
    \label{fig:acceptance_rate}
\end{figure}

As shown in Figure~\ref{fig:acceptance_rate}, we compare the acceptance rates under both SC-DSD and MC-DSD scenarios. As $K$ decreases from $32,000$ to $3,200$, the acceptance rate exhibits only a minor decline, indicating that our proposed TSLT solution is robust: transmitting fewer logits introduces only a small approximation error and does not significantly degrade acceptance performance. Even when only $1\%$ of the logits are transmitted, the performance remains largely unaffected. However, for $K=32$, the acceptance rate drops noticeably. This behavior corresponds to the top-K mass shown in Figure~\ref{fig:probability_mass}; when $K=32$, the residual probability (i.e., $1-$top-K mass) becomes significant, leading to a more pronounced decrease in acceptance performance.

Furthermore, when comparing the two approaches, MC-DSD consistently achieves a higher acceptance rate than SC-DSD. This improvement can be attributed to the multi-candidate sampling strategy employed in MC-DSD, where an additional candidate is sampled in each iteration, thereby increasing the likelihood of acceptance. Overall, these results highlight that our TSLT-enhanced MC-DSD framework not only maintains stability with smaller $K$ values but also benefits from enhanced acceptance due to more extensive candidate exploration.

\subsection{Speedup ratio}
To evaluate the efficiency gains of the Truncated Sparse Logits Transmission (TSLT) scheme, we examine the inference speedup achieved under varying uplink transmission rates. This analysis helps quantify how TSLT mitigates communication overhead and accelerates DSD inference.
% \begin{figure}
%     \centering
%     \includegraphics[width=\linewidth]{speedup_ratio.pdf}
%     \caption{Speedup Ratio}
%     \label{fig:Speedup Ratio}
% \end{figure}

\begin{figure}
    \centering
    \includegraphics[width=\linewidth]{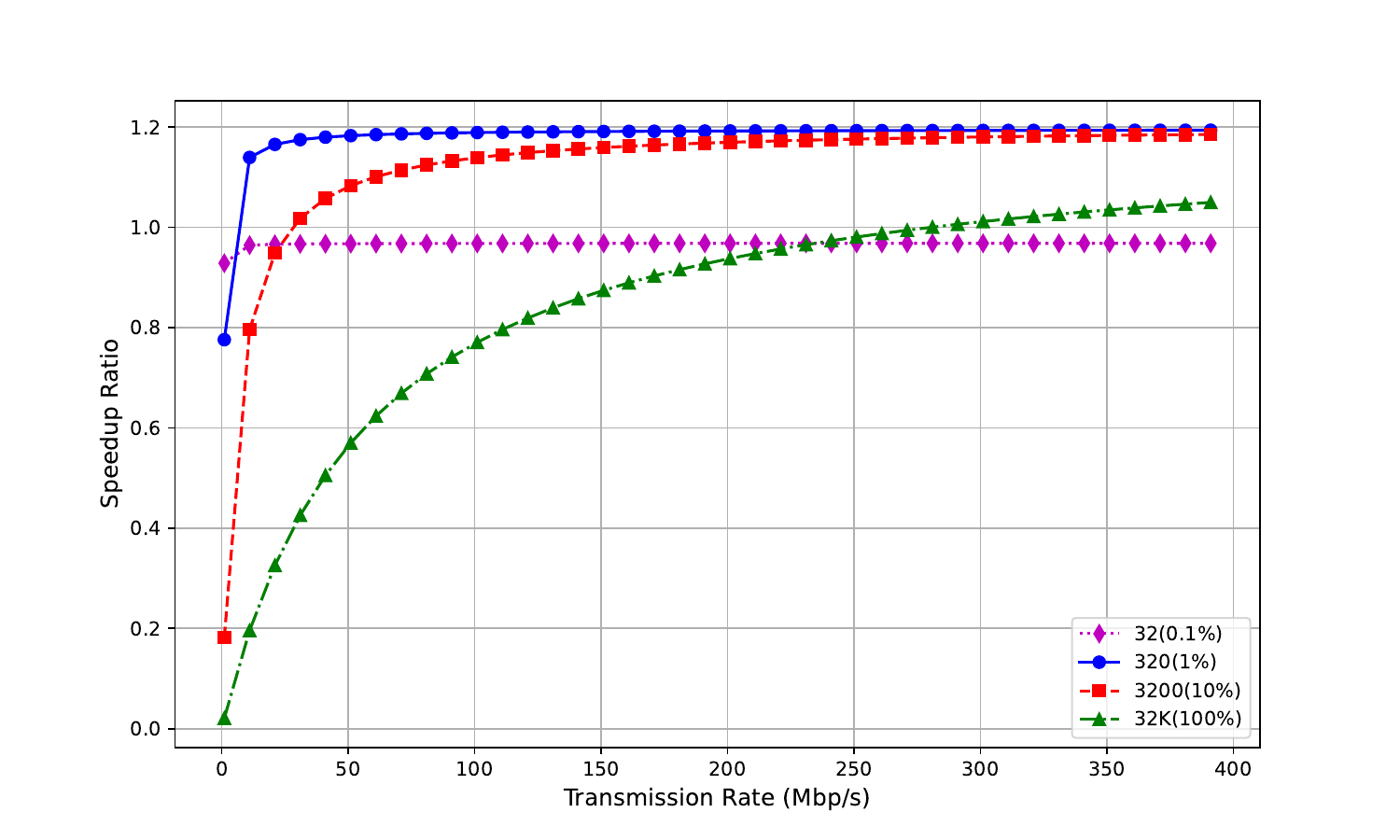}
    \caption{Speedup Ratio under different uplink transmission rates (SC-DSD).}
    \label{fig:Speedup Ratio}
\end{figure}

\begin{figure}
    \centering
    \includegraphics[width=\linewidth]{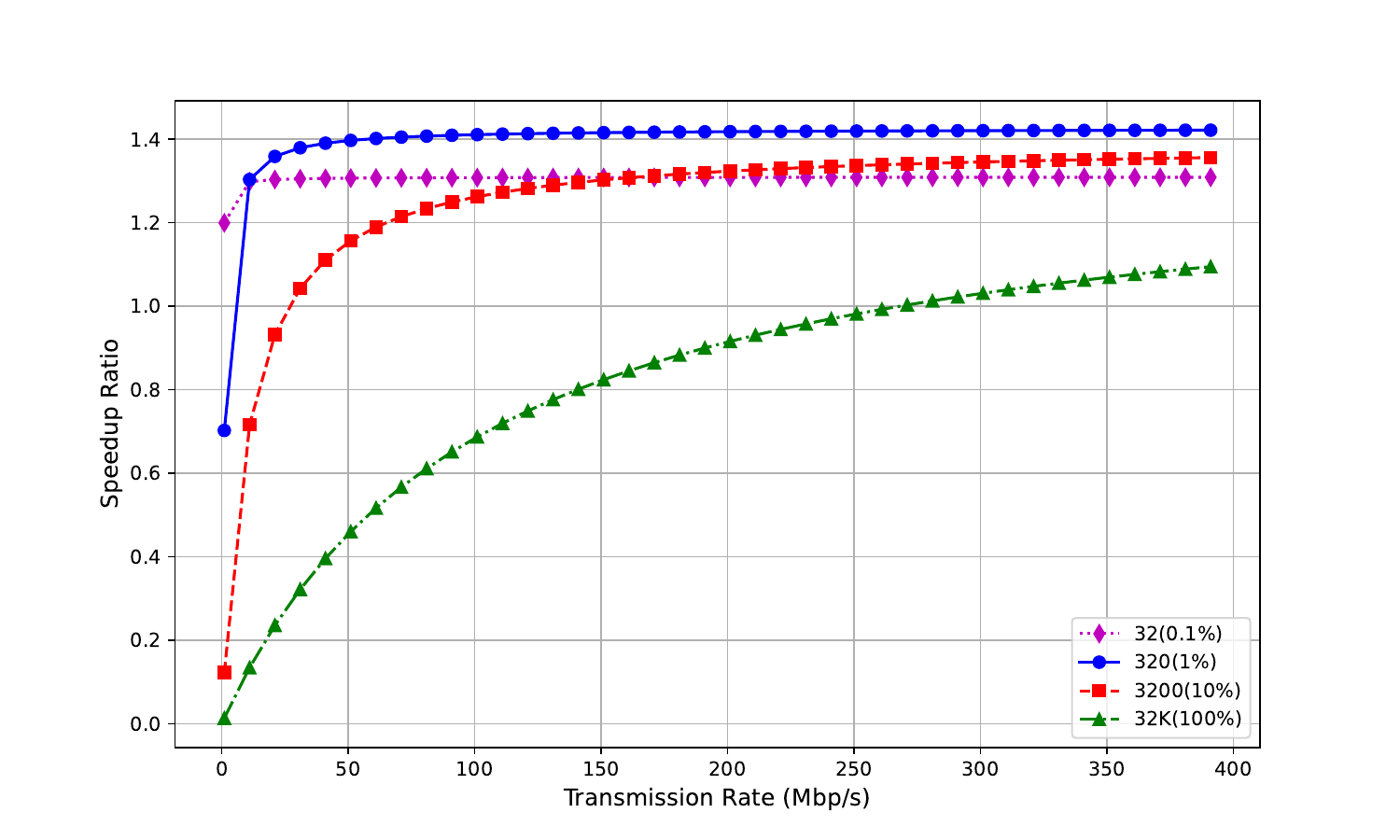}
    \caption{Speedup Ratio under different uplink transmission rates  (MC-DSD).}
    \label{fig:Speedup Ratio2}
\end{figure}

Figures~\ref{fig:Speedup Ratio} and~\ref{fig:Speedup Ratio2} depict the speedup ratio under different uplink rates for SC-DSD and MC-DSD, respectively. As observed, the use of TSLT significantly improves the speedup, demonstrating its effectiveness in mitigating communication bottlenecks and enhancing inference efficiency under constrained network conditions. Furthermore, MC-DSD consistently outperforms SC-DSD, as generating multiple draft candidates per token position allows for greater parallelism and higher overall acceleration.

However, in the extreme compression scenario where $K=32$ (i.e., 0.1\% compression), the speedup ratio decreases. This is due to a lower acceptance rate, which increases the number of calls to the SLM and LLM—leading to longer computation time. This effect is particularly pronounced at higher uplink rates, where the communication bottleneck is less significant and computation dominates the overall latency.

\section{Conclusion and Future Work}
\label{sec:conclusion}
In this work, we investigated speculative decoding for distributed large language model (LLM) inference in AI-native radio access networks (AI-RAN) and proposed the Truncated Sparse Logits Transmission (TSLT) solution to mitigate the high uplink communication cost. We rigorously established the robustness of the acceptance rate under TSLT through a series of theorems and proofs, and further extended it to multi-candidate speculative decoding to verify multiple draft tokens simultaneously. Experimental results validate our theoretical findings, demonstrating that TSLT can significantly reduce communication overhead while preserving inference performance, highlighting its potential for efficient and scalable distributed LLM inference in resource-constrained AI-RAN scenarios.

Our work assumes that the device-based small language model (SLM) and server-based LLM share the same vocabulary, and we focus on communication efficiency without considering computational overhead on either side. Addressing heterogeneous vocabularies—e.g., by integrating strategies from Lossless Speculative Decoding for Heterogeneous Vocabularies or OmniDraft—and optimizing the trade-off between communication and computation are important directions for future work, which can further enhance the practicality and efficiency of distributed LLM inference in AI-RAN.

\section*{Appendix}
\appendix

\section{Proof of Lemma~\ref{lemma:TVD}}
\label{app:proof_TVD}
\begin{proof} 
Let $a(v)=\hat{q}(v)-p(v)$ and $b(v)=q(v) - p(v)$, we have
\begin{align}
    \label{eq:TV_QP}
    \mathrm{TV}(\hat{Q},P) = \frac{1}{2} \sum_{v\in\mathcal{V}} |a(v)|, \\
    \label{eq:TV_QQP}
    \mathrm{TV}(Q,P) = \frac{1}{2} \sum_{v\in\mathcal{V}} |b(v)|
\end{align}
\begin{equation}
\label{eq:TV_QQQ}
    \mathrm{TV}(\hat{Q},Q)=\frac{1}{2}\sum_{v\in\mathcal{V}} |\hat{q}(v)-q(v)|
    = \frac{1}{2}\sum_{v\in\mathcal{V}}|a(v) - b(v)|, 
\end{equation}
By the reverse triangle inequality,
\begin{equation}
    \left| \frac{1}{2}\sum_{v\in\mathcal{V}}|a(v)|-\frac{1}{2}\sum_{v\in\mathcal{V}}|b(v)| \right| \leq \frac{1}{2}\sum_{v\in\mathcal{V}}|a(v) - b(v)|
\end{equation}
Combing this with \eqref{eq:TV_QP}, \eqref{eq:TV_QQP} and \eqref{eq:TV_QQQ}, we obtain
\begin{equation}
\big|\mathrm{TV}(\hat{Q},P) - \mathrm{TV}(Q,P)\big| \le \mathrm{TV}(\hat{Q},Q).
\end{equation}
\end{proof}

\section{Proof of Theorem~\ref{theorem:acceptance_rate}}
\label{app:proof_acceptance}
\begin{proof}
Let $\mathcal{V'} =  \left\{x^{[1]},\cdots,x^{[K]} \right\}$, and $\sum_{v\in\mathcal{V'}}q(v)=\rho$, the sparsified distribution after truncated sampling is
\begin{equation}
    \hat{q}(x) = \begin{cases}
    0,  & x < 0, \\
    \frac{q(x)}{\rho},  & x \geq 0.
    \end{cases}
\end{equation}
Therefore, we have
\begin{align}
&\sum_{v\in \mathcal{V}}\left| \hat{q}(v)-q(v)   \right|_1 \notag\\
=& \sum_{v \in \mathcal{V'}} \left| \hat{q}(v)-q(v) \right|_1 +  \sum_{v \notin \mathcal{V'}} \left| \hat{q}(v)-q(v) \right|_1 \notag\\
\overset{(a)}{=}&  \sum_{v \in \mathcal{V'}} \left| \frac{q(v)}{\rho}-q(v) \right|_1 +  \sum_{v \notin \mathcal{V'}} q(v) \notag \\
=& \frac{1-\rho}{\rho} \sum_{v \in \mathcal{V'}} q(v)   +  \sigma \notag \\
\overset{(b)}{=}& 2\sigma
\end{align}
where $(a)$ is from \eqref{eq:truncation}, and $(b)$ is from $\rho = \sum_{v \in \mathcal{V'}} q(v)$ and $\sigma = \sum_{v \notin \mathcal{V'}} q(v)$.

The total variance distance (TVD) between $Q(x)$ and $\hat{Q}(x)$ is 
\begin{align}
    TV \left( \hat{Q}(x),Q(x) \right)
    = \frac{1}{2} \sum_{v\in \mathcal{V}}\left| \hat{q}(v)-q(v) \right|_1 = \sigma
\end{align}
And
\begin{align}
    \left| \hat{\beta}- \beta \right| 
    \overset{(c)}{=}& \left| TV\left( \hat{Q}(x),P(x) \right)-TV\left( Q(x),P(x) \right) \right| \notag\\
    \leq& ~TV\left( \hat{Q}(x),Q(x) \right) = \sigma.
\end{align}
where $(c)$ follows from the following lemma.
\end{proof}

\section{Proof of Lemma~\ref{lemma:TV_MC-DSD}}
\label{app:proof_TV_MC-DSD}
\begin{proof}
\begin{align}
    & |q(x)-p(x)| \notag\\
    =& \big| [q(x)-\hat{q}(x)]+[\hat{q}(x)-\hat{p}(x)]+[\hat{p}(x)-p(x)] \big| \notag\\
    \leq& |q(x)-\hat{q}(x)|+|\hat{q}(x)-\hat{p}(x)|+|\hat{p}(x)-p(x)|
\end{align}
Therefore, we have
\begin{equation}
    TV(Q,P) \leq TV(Q,\hat{Q}) + TV(\hat{Q},\hat{P}) + TV(\hat{P},P)
\end{equation}
And 
\begin{equation}
    TV(Q,P)-TV(\hat{Q},\hat{P}) \leq TV(Q,\hat{Q}) +  TV(\hat{P},P)
\end{equation}
By symmetry, interchanging $(Q,P)$ and $(\hat{Q},\hat{P})$ leads to the reverse inequality; 
combining the two inequalities establishes the desired absolute bound.
\end{proof}
\section{Proof of Lemma~\ref{lemma_TV_recurse}}
\label{app:proof_TV_recurse}
\begin{proof}
Let 
\begin{align}
    r^{(i)}(x) = \max \{0, p^{(i-1)}(x)-q(x)\} \\
    \hat{r}^{(i)}(x) = \max \{0, \hat{p}^{(i-1)}(x)-\hat{q}(x)\}
\end{align}
And
\begin{align}
    p^{(i)}(x) = \frac{r^{(i)}(x)}{Z^{(i)}}, ~~~Z^{(i)} = \sum_{x \in \mathcal{V}} r^{(i)}(x) = TV(P^{(i)},Q)\\
    \hat{p}^{(i)}(x) = \frac{\hat{r}^{(i)}(x)}{\hat{Z}^{(i)}}, ~~~\hat{Z}^{(i)} = \sum_{x \in \mathcal{V}} \hat{r}^{(i)}(x) = TV(\hat{P}^{(i)},\hat{Q}) \label{eq:Z_hat}
\end{align}
Then
\begin{align}
\label{eq:ppp}
    & \sum_{x\in \mathcal{V}} | p^{(i)}(x) -\hat{p}^{(i)}(x) | \notag\\
    =& \sum_{x\in \mathcal{V}} \left|  \frac{r^{(i)}(x)}{Z^{(i)}} - \frac{\hat{r}^{(i)}(x)}{\hat{Z}^{(i)}}   \right| \notag\\
    =& \sum_{x\in \mathcal{V}} \left| \frac{r^{(i)}(x) - \hat{r}^{(i)}(x)}{Z^{(i)}} + \hat{r}^{(i)}(x)\left(\frac{1}{\hat{Z}^{(i)}} - \frac{1}{Z^{(i)}}\right) \right| \notag\\
    \overset{(a)}{\leq}& \sum_{x\in \mathcal{V}} \left| \frac{r^{(i)}(x) - \hat{r}^{(i)}(x)}{Z^{(i)}} \right| + \sum_{x\in \mathcal{V}} \left| \hat{r}^{(i)}(x)\left(\frac{1}{\hat{Z}^{(i)}} - \frac{1}{Z^{(i)}}\right) \right| \notag\\
    =& \sum_{x\in \mathcal{V}} \frac{ \left| r^{(i)}(x) - \hat{r}^{(i)}(x) \right|}{Z^{(i)}}  +  \frac{\left| Z^{(i)}-\hat{Z}^{(i)} \right|}{\hat{Z}^{(i)}Z^{(i)}} \sum_{x\in \mathcal{V}} \hat{r}^{(i)}(x) \notag\\
    \overset{(b)}{=}& \frac{ \sum_{x\in \mathcal{V}} \left| r^{(i)}(x) - \hat{r}^{(i)}(x) \right|}{Z^{(i)}}  +  \frac{\left| Z^{(i)}-\hat{Z}^{(i)} \right|}{Z^{(i)}} \notag\\
    \overset{(c)}{\leq}& \frac{2}{Z^{(i)}} \sum_{x\in \mathcal{V}} \left| r^{(i)}(x) - \hat{r}^{(i)}(x) \right| \notag\\
    \overset{(d)}{\leq}& \frac{2}{Z^{(i)}} \left[ \sum_{x\in \mathcal{V}}|p^{(i-1)}(x)-\hat{p}^{(i-1)}(x)| + \sum_{x\in \mathcal{V}}|\hat{q}(x)-q(x)|\right]
\end{align}
where (a) is from triangular inequality, (b) is from \eqref{eq:Z_hat}, and (c) is from the followings:
\begin{align}
\left| \sum_{x\in \mathcal{V}}r^{(i)}(x) -  \sum_{x\in \mathcal{V}}\hat{r}^{(i)}(x) \right| 
=& \left| \sum_{x\in \mathcal{V}} \left[r^{(i)}(x) - \hat{r}^{(i)}(x) \right]\right| \notag\\
\overset{(e)}{\leq}& \sum_{x\in \mathcal{V}} \left| r^{(i)}(x) - \hat{r}^{(i)}(x) \right| \notag\\
\end{align} 
and (e) is according to the inverse triangular inequality.
For (d), As ReLU is 1-Lipschitz, the following holds: 
\begin{align}
    &|r^{(i)}(x)-\hat{r}^{(i)}(x)| \notag\\
    \leq& |r^{(i)}(x)-\hat{r}^{(i)}(x)| \notag\\
    =& |[p^{(i-1)}(x)-q(x)] - [\hat{p}^{(i-1)}(x)-\hat{q}(x)]| \notag\\
    =& |[p^{(i-1)}(x)-\hat{p}^{(i-1)}(x)] + [\hat{q}(x)-q(x)]| \notag\\
    \leq& |p^{(i-1)}(x)-\hat{p}^{(i-1)}(x)| + |\hat{q}(x)-q(x)|
\end{align}

Dividing both sides of \eqref{eq:ppp} by two yields \eqref{eq:TV_PP_Bound}. 
Specifically, $TV \big( P^{(1)}, \hat{P}^{(1)} \big) = 0$, since $P^{(1)} = \hat{P}^{(1)}$. Hence, \eqref{eq:TV_PP_Bound} holds.

\end{proof}

\section*{Declaration of generative AI and AI-assisted technologies in the manuscript preparation process}

During the preparation of this work the author(s) used ChatGPT (OpenAI) in order to improve the clarity and readability of the manuscript. After using this tool, the author(s) reviewed and edited the content as needed and take full responsibility for the content of the published article.

\printcredits

%% Loading bibliography style file
%\bibliographystyle{model1-num-names}
\bibliographystyle{cas-model2-names}

% Loading bibliography database
\bibliography{cas-refs}

@article{ding2024hybrid,
  title={Hybrid {LLM}: Cost-efficient and quality-aware query routing},
  author={Ding, Dujian and Mallick, Ankur and Wang, Chi and Sim, Robert and Mukherjee, Subhabrata and Ruhle, Victor and Lakshmanan, Laks VS and Awadallah, Ahmed Hassan},
  journal={arXiv preprint arXiv:2404.14618},
  year={2024}
}

@inproceedings{hao2024hybrid,
  title={Hybrid {SLM} and {LLM} for edge-cloud collaborative inference},
  author={Hao, Zixu and Jiang, Huiqiang and Jiang, Shiqi and Ren, Ju and Cao, Ting},
  booktitle={Proceedings of the Workshop on Edge and Mobile Foundation Models},
  pages={36--41},
  year={2024}
}

@article{chen2023accelerating,
  title={Accelerating large language model decoding with speculative sampling},
  author={Chen, Charlie and Borgeaud, Sebastian and Irving, Geoffrey and Lespiau, Jean-Baptiste and Sifre, Laurent and Jumper, John},
  journal={arXiv preprint arXiv:2302.01318},
  year={2023}
}

@inproceedings{leviathan2023fast,
  title={Fast inference from transformers via speculative decoding},
  author={Leviathan, Yaniv and Kalman, Matan and Matias, Yossi},
  booktitle={International Conference on Machine Learning},
  pages={19274--19286},
  year={2023},
  organization={PMLR}
}

@inproceedings{zhao2024edge,
  title={Edge and terminal cooperation enabled {LLM} deployment optimization in wireless network},
  author={Zhao, Wentao and Jing, Wenpeng and Lu, Zhaoming and Wen, Xiangming},
  booktitle={2024 IEEE/CIC International Conference on Communications in China (ICCC Workshops)},
  pages={220--225},
  year={2024},
  organization={IEEE}
}

@article{oh2025communication,
  title={Communication-Efficient Hybrid Language Model via Uncertainty-Aware Opportunistic and Compressed Transmission},
  author={Oh, Seungeun and Kim, Jinhyuk and Park, Jihong and Ko, Seung-Woo and Choi, Jinho and Quek, Tony QS and Kim, Seong-Lyun},
  journal={arXiv preprint arXiv:2505.11788},
  year={2025}
}

@article{xie2025novel,
  title={A Novel Hat-Shaped Device-Cloud Collaborative Inference Framework for Large Language Models},
  author={Xie, Zuan and Xu, Yang and Xu, Hongli and Liao, Yunming and Yao, Zhiwei},
  journal={arXiv preprint arXiv:2503.18989},
  year={2025}
}

@article{shao2025ai,
  title={{AI} Flow at the Network Edge},
  author={Shao, Jiawei and Li, Xuelong},
  journal={IEEE Network},
  year={2025},
  publisher={IEEE}
}

@article{oh2024uncertainty,
  title={Uncertainty-Aware Hybrid Inference with On-Device Small and Remote Large Language Models},
  author={Oh, Seungeun and Kim, Jinhyuk and Park, Jihong and Ko, Seung-Woo and Quek, Tony QS and Kim, Seong-Lyun},
  journal={arXiv preprint arXiv:2412.12687},
  year={2024}
}

@inproceedings{park2025uncertainty,
  title={Uncertainty-Aware Opportunistic Hybrid Language Model in Wireless Robotic Systems},
  author={Park, Jeyoung and Lim, Yeonsub and Oh, Seungeun and Park, Jihong and Kim, Seong-Lyun},
  booktitle={ICML 2025 Workshop on Machine Learning for Wireless Communication and Networks (ML4Wireless)},
  pages = {--}
}

@article{park2025action,
  title={Action Deviation-Aware Inference for Low-Latency Wireless Robots},
  author={Park, Jeyoung and Lim, Yeonsub and Oh, Seungeun and Park, Jihong and Choi, Jinho and Kim, Seong-Lyun},
  journal={arXiv preprint arXiv:2510.02851},
  year={2025}
}

@inproceedings{ning2025dssd,
  title={{DSSD}: Efficient Edge-Device Deployment and Collaborative Inference via Distributed Split Speculative Decoding},
  author={NING, JIAHONG and ZHENG, Ce and Yang, Tingting},
  booktitle={ICML 2025 Workshop on Machine Learning for Wireless Communication and Networks (ML4Wireless)},
  pages = {--}
}

@article{zheng2025communication,
  title={Communication-Efficient Collaborative LLM Inference via Distributed Speculative Decoding},
  author={Zheng, Ce and Yang, Tingting},
  journal={arXiv preprint arXiv:2509.04576},
  year={2025}
}

@article{khisti2024multi,
  title={Multi-Draft Speculative Sampling: Canonical Decomposition and Theoretical Limits},
  author={Khisti, Ashish and Ebrahimi, M Reza and Dbouk, Hassan and Behboodi, Arash and Memisevic, Roland and Louizos, Christos},
  journal={arXiv preprint arXiv:2410.18234},
  year={2024}
}

@article{yang2024multi,
  title={Multi-candidate speculative decoding},
  author={Yang, Sen and Huang, Shujian and Dai, Xinyu and Chen, Jiajun},
  journal={arXiv preprint arXiv:2401.06706},
  year={2024}
}

@inproceedings{miao2024specinfer,
  title={Specinfer: Accelerating large language model serving with tree-based speculative inference and verification},
  author={Miao, Xupeng and Oliaro, Gabriele and Zhang, Zhihao and Cheng, Xinhao and Wang, Zeyu and Zhang, Zhengxin and Wong, Rae Ying Yee and Zhu, Alan and Yang, Lijie and Shi, Xiaoxiang and others},
  booktitle={Proceedings of the 29th ACM International Conference on Architectural Support for Programming Languages and Operating Systems, Volume 3},
  pages={932--949},
  year={2024}
}

@article{lu2024improving,
  title={Improving multi-candidate speculative decoding},
  author={Lu, Xiaofan and Zeng, Yixiao and Ma, Feiyang and Yu, Zixu and Levorato, Marco},
  journal={arXiv preprint arXiv:2409.10644},
  year={2024}
}

@article{yin2024theoretical,
  title={A theoretical perspective for speculative decoding algorithm},
  author={Yin, Ming and Chen, Minshuo and Huang, Kaixuan and Wang, Mengdi},
  journal={Advances in Neural Information Processing Systems},
  volume={37},
  pages={128082--128117},
  year={2024}
}

@article{touvron2023llama,
  title={Llama 2: Open foundation and fine-tuned chat models},
  author={Touvron, Hugo and Martin, Louis and Stone, Kevin and Albert, Peter and Almahairi, Amjad and Babaei, Yasmine and Bashlykov, Nikolay and Batra, Soumya and Bhargava, Prajjwal and Bhosale, Shruti and others},
  journal={arXiv preprint arXiv:2307.09288},
  year={2023}
}

@ARTICLE{kundu2025,
  author={Kundu, Lopamudra and Lin, Xingqin and Gadiyar, Rajesh and Lacasse, Jean-Francois and Chowdhury, Shuvo},
  journal={IEEE Communications Magazine}, 
  title={AI-RAN: Transforming RAN with AI-Driven Computing Infrastructure}, 
  year={2025},
  volume={},
  number={},
  pages={1-7},
  keywords={Artificial intelligence;Training;Dynamic scheduling;Computer architecture;Communications technology;Resource management;Edge AI;Computational modeling;Real-time systems;Industries},
  doi={10.1109/MCOM.001.2500018}}

@inproceedings{gao2025towards,
  title={Towards Energy-Efficient Edge Inference in Radio Cpns: a Mixture-of-Depths Transformer Based Tri-Parallel Distributed Approach},
  author={Gao, Liu and Gao, Dixiang and Xia, Nian and Peng, Mugen and Wang, Dong and Liu, Xiqing},
  booktitle={ICC 2025-IEEE International Conference on Communications},
  pages={1019--1024},
  year={2025},
  organization={IEEE}
}

@article{timor2025accelerating,
  title={Accelerating llm inference with lossless speculative decoding algorithms for heterogeneous vocabularies},
  author={Timor, Nadav and Mamou, Jonathan and Korat, Daniel and Berchansky, Moshe and Jain, Gaurav and Pereg, Oren and Wasserblat, Moshe and Harel, David},
  journal={arXiv preprint arXiv:2502.05202},
  year={2025}
}

@article{ramakrishnan2025omnidraft,
  title={OmniDraft: A Cross-vocabulary, Online Adaptive Drafter for On-device Speculative Decoding},
  author={Ramakrishnan, Ramchalam Kinattinkara and Yuan, Zhaocong and Zhuo, Shaojie and Feng, Chen and Lin, Yicheng and Su, Chenzheng and Zhang, Xiaopeng},
  journal={arXiv preprint arXiv:2507.02659},
  year={2025}
}

% %\vskip3pt

% \bio{}
% Author biography without author photo.
% Author biography. Author biography. Author biography.
% Author biography. Author biography. Author biography.
% Author biography. Author biography. Author biography.
% Author biography. Author biography. Author biography.
% Author biography. Author biography. Author biography.
% Author biography. Author biography. Author biography.
% Author biography. Author biography. Author biography.
% Author biography. Author biography. Author biography.
% Author biography. Author biography. Author biography.
% \endbio

% \bio{figs/cas-pic1}
% Author biography with author photo.
% Author biography. Author biography. Author biography.
% Author biography. Author biography. Author biography.
% Author biography. Author biography. Author biography.
% Author biography. Author biography. Author biography.
% Author biography. Author biography. Author biography.
% Author biography. Author biography. Author biography.
% Author biography. Author biography. Author biography.
% Author biography. Author biography. Author biography.
% Author biography. Author biography. Author biography.
% \endbio

% \bio{figs/cas-pic1}
% Author biography with author photo.
% Author biography. Author biography. Author biography.
% Author biography. Author biography. Author biography.
% Author biography. Author biography. Author biography.
% Author biography. Author biography. Author biography.
% \endbio

\end{document}